\providecommand{\U}[1]{\protect\rule{.1in}{.1in}}
\providecommand{\U}[1]{\protect\rule{.1in}{.1in}}
\DeclareMathAlphabet{\pazocal}{OMS}{zplm}{m}{n}
\newtheorem{theorem}{Theorem}[section]
\newtheorem{definition}[theorem]{Definition}
\newtheorem{lemma}[theorem]{Lemma}
\newtheorem{proposition}[theorem]{Proposition}
\newtheorem{remark}[theorem]{Remark}
\numberwithin{equation}{section}
\numberwithin{theorem}{section}
\newcommand{\qed}{\hfill$\Box$}
\newenvironment{proof}{\begin{trivlist}\item[]{\em Proof:}\/}{\qed\end{trivlist}}
\newenvironment{proofof}[1][Proof]{\noindent \textit{#1.} }{\ \qed}
\newcommand{\R}{{\mathbb R}}
\newcommand{\C}{{\mathbb C\hspace{0.05 ex}}}
\newcommand{\N}{{\mathbb N}}
\newcommand{\rmd}{{\rm d}}
\newcommand{\beq}{\begin{equation}}
\newcommand{\eeq}{\end{equation}}
\newcommand{\beqs}{\begin{eqnarray}}
\newcommand{\eeqs}{\end{eqnarray}}
\newcommand{\norm}[1]{\Vert #1\Vert}
\newcommand{\defset}[2]{ \left\{ #1\,{:}\,
#2\makebox[0cm]{$\displaystyle\phantom{#1}$}\right\} }
\newcommand{\vep}{\varepsilon}
\newcommand{\defem}[1]{{\em #1\/}}
\newcounter{jlisti}
\newcommand{\cf}[1]{{\mathbbm 1}_{\{#1\}}}
\DeclareMathOperator{\diam}{diam}
\begin{document}

\title{Localization in stationary non-equilibrium solutions for multicomponent coagulation systems}
\author{Marina A. Ferreira, Jani Lukkarinen, Alessia Nota, Juan J. L. Vel\'azquez}
\maketitle

\begin{abstract}
We consider the multicomponent Smoluchowski coagulation equation under non-equi\-librium conditions induced either by a source term or via a constant flux constraint.  
We prove that the corresponding stationary non-equilibrium solutions
have a universal localization property.
More precisely, we show that these solutions asymptotically localize into a direction determined by the source or by a flux constraint: the ratio between monomers of a given type to the total number of monomers in the cluster becomes ever closer to a predetermined ratio as the cluster size is increased. 
The assumptions on the coagulation kernel are quite general, with isotropic power law bounds.
The proof relies on a particular measure concentration estimate and on the control of asymptotic scaling of the solutions which is 
allowed by previously derived estimates on the mass current observable of the system.
\end{abstract}

\textbf{Keywords:}  Multicomponent Smoluchowski's equation;  localization; stationary injection solutions; constant flux solutions.

\tableofcontents

\bigskip

\bigskip

\section{Introduction}

Many processes of particle formation in the atmosphere are due to the
aggregation of gas molecules into small molecular clusters which grow by
colliding with additional gas molecules and with each other (cf.~\cite{Vehkam,Vehkamki2012}). The aggregation process of gas
molecules to form larger clusters is usually described using the so-called
General Dynamic Equation (cf.~\cite{Fried}). This model includes in particular
effects like cluster-cluster reaction as well as cluster-gas reaction.  It also allows for
source terms for small clusters, to take into account physically relevant source processes
such as molecules produced in reactions of vegetation to sunlight.

In many situations of 
interest the particle clusters are made of aggregates of
different types of coagulating molecular types, called 
\defem{monomers}. 
Keeping track of the composition of the clusters, instead only of the total number of particles in them, 
can be important for the applications.
For instance, atmospheric coagulation involves chemically distinct
monomer types such as sulfuric acid and ammonia or dimethylamine molecules which are bases (for an example about the resulting effects on the composition of small clusters we refer to the numerical study 
in \cite{Vehkam}). 

An important common feature of these multicomponent models is that, in the absence of source terms and gelation, 
they conserve the total mass of each monomer type.  Each of these conservation laws is then naturally associated with a current observable, corresponding to the \defem{flux} of the monomer type towards arbitrarily large cluster sizes.
The main motivation for the present work is to develop understanding of the asymptotic, large time and cluster size, properties of the solutions of these equations, both with and without source terms.
We consider here only cases which have a source term or a non-zero constant flux of particles from small to large cluster sizes, and call these \defem{non-equilibrium} solutions.  We also ignore the effect of fragmentation, as motivated later.

Following a common approach in such analysis of asymptotic large-scale properties, the first step is to understand what happens to stationary, i.e., time-independent, solutions to the equations.  In a related work \cite{FLNV1}, we have already extended the analysis from the one-component case in \cite{FLNV}, and we have shown that a natural class of coagulation kernels can be easily partitioned into two subclasses depending on whether a stationary solution exists or not.
Here we continue the analysis of stationary solutions in the class of kernels which can have a stationary 
non-equilibrium solution.  Although not necessarily unique, we will prove that these solutions
share a somewhat surprising common feature which we call \defem{asymptotic localization}: for large clusters, the solutions will become completely localized in one direction.   The direction is determined by the source or flux constraint. Roughly summarized, if one is interested in the monomer composition of a typical large cluster, the ratio of the number of monomers in that cluster will be the same as in the total induced flux, with small corrections which vanish with increasing cluster size.  
(More detailed definitions and discussion will be given Section \ref{sec:asymplocintro}.)

At the first sight, it might appear that this localization is a feature produced by the strong perturbation associated with the constant input of monomers into the system.  Instead, we conjecture that
the localization is a fairly universal feature of the related time-dependent solutions, even without  influx of particles.  Indeed, preliminary results on particular example cases support this conjecture, and we expect to report on more general analysis soon in \cite{FLNV2}.

To formulate the problem mathematically, we label the clusters by 
$\alpha=\left(  \alpha_{1},\alpha
_{2},\ldots ,\alpha_{d}\right)  \in\mathbb{N}_{0}^{d}$, where $d$ denotes the number of 
monomer types and $\alpha_j$ the number of monomers of the type $j$ in the corresponding cluster.
The relevant version of the General Dynamic
Equation which includes coagulation, fragmentation, and monomer injection is then
\begin{equation}
\partial_{t}n_{\alpha}=\frac{1}{2}\sum_{\beta<\alpha}K_{\alpha-\beta,\beta
}n_{\alpha-\beta}n_{\beta}-n_{\alpha}\sum_{\beta>0}K_{\alpha,\beta}n_{\beta
}+\sum_{\beta>0}\Gamma_{\alpha+\beta,\alpha}n_{\alpha+\beta}-\frac{1}{2}%
\sum_{\beta<\alpha}\Gamma_{\alpha,\beta}n_{\alpha}+
\sum_{\left\vert\beta\right\vert =1}
s_{\beta}\delta_{\alpha,\beta} \label{B1}%
\end{equation}
where $\left\vert \alpha\right\vert $ denotes the $\ell^{1}$ norm of $\alpha$,
namely
\begin{equation}
\label{eq:l1norm}\left\vert \alpha\right\vert =\sum_{j=1}^{d}\alpha_{j}.
\end{equation}
Moreover, we will assume that in (\ref{B1}) we have $\alpha\neq O=\left(
0,0,\ldots ,0\right)$.

The coefficients $K_{\alpha,\beta}$ describe the coagulation rate between
clusters with compositions $\alpha$ and $\beta$,
and the coefficients $\Gamma_{\alpha,\beta}$
describe the fragmentation rate of 
clusters of composition $\alpha$ into 
two clusters, one with composition $\beta$ and the other with 
$\left(  \alpha-\beta\right)$.
Suppose that $\alpha=\left(  \alpha_{1},\alpha_{2},\ldots ,\alpha_{d}\right)  $
and $\beta=\left(  \beta_{1},\beta_{2},\ldots ,\beta_{d}\right)  .$ We use the
notation $\beta<\alpha$ to indicate that $\beta_{k}\leq\alpha_{k}$ for all
$k=1,2,\ldots ,d$, and in addition $\alpha\neq\beta.$ We denote as $s_{\beta}$ the
source of monomers characterized by the composition $\beta.$ 
In (\ref{B1}), we have only allowed source terms with
$\left\vert \beta\right\vert =1$ which correspond to injecting clusters 
containing only one monomer, although of any of the $d$ possible types.
We will relax this condition in the results to allow for source terms 
$s_{\beta}$ which are 
supported on a finite set of values $\beta$.

The coefficients $K_{\alpha,\beta}$ yield the coagulation rate between
clusters $\alpha$ and $\beta$ to produce clusters $\left(  \alpha
+\beta\right)  .$ The form of these coefficients depends on the specific
mechanism which is responsible for the aggregation of the clusters.
These coefficients have been computed using kinetic models under different
assumptions on the particle sizes and the processes describing the motion of the
clusters. Various choices of the coagulation kernel have been described in the literature, 
see e.g.\ the textbook \cite{Fried}. 

We will continue to work under the same class of coagulation kernels as in our 
previous work on multicomponent coagulation \cite{FLNV1}.   The precise conditions are collected 
in Section \ref{sec:kernelassumptions}.
In particular, 
the results proven here will apply to the so called 
\emph{diffusive coagulation} or \emph{Brownian kernel} 
\begin{equation}
K_{\alpha,\beta}= C \left(  \frac{1}{\left(  V\left(
\alpha\right)  \right)  ^{\frac{1}{3}}}+\frac{1}{\left(  V\left(
\beta\right)  \right)  ^{\frac{1}{3}}}\right)  \left(  \left(  V\left(
\alpha\right)  \right)  ^{\frac{1}{3}}+\left(  V\left(  \beta\right)  \right)
^{\frac{1}{3}}\right)  \label{KernBrow}%
\end{equation}
where $C>0$ is a constant depending on the process producing the diffusion, 
assuming that the volume scales linearly with the number of monomers in the cluster, i.e., if 
\begin{equation}
k_{1}\left\vert \alpha\right\vert \leq V\left(  \alpha\right)  \leq
k_{2}\left\vert \alpha\right\vert \ \ \ \ \text{with}\quad0<k_{1}\leq
k_{2}<\infty\,.\label{MasVolIneq}%
\end{equation}
The inequalities (\ref{MasVolIneq}) hold, for
instance,  if we suppose $V\left(  \alpha\right)  =\sum_{j=1}^{d}\alpha_{j}v_{j}$ where
$v_{j}>0$ for each $j=1,2,\ldots ,d.$

In this paper, we will ignore the fragmentation terms in (\ref{B1}), and set $\Gamma_{\alpha,\beta}=0.$ The rationale for this, as
 discussed in \cite{FLNV}, is that in the situations we are
interested, the formation of larger particles is energetically favourable and therefore only the coagulation of clusters must be taken into account.  This yields the following evolution problem 
\begin{equation}
\partial_{t}n_{\alpha}=\frac{1}{2}\sum_{\beta<\alpha}K_{\alpha-\beta,\beta
}n_{\alpha-\beta}n_{\beta}-n_{\alpha}\sum_{\beta>0}K_{\alpha,\beta}n_{\beta
}+\sum_{\beta}s_{\beta}\delta_{\alpha,\beta}\,.
\label{B2}%
\end{equation}
We are interested in the steady states of (\ref{B2}) as well as in the steady
states of the continuous version of (\ref{B2}) which is given by 
\begin{align}
& \partial_{t}f\left(  x\right)  =\frac{1}{2}\int_{\left\{  0<y<x \right\}
}dy K\left(  x-y,y\right)  f\left(  x-y\right)  f\left(  y\right)
\nonumber \\ & \qquad
-\int_{ \R^d}dy K\left(  x,y\right)  f\left(  x\right)
f\left(  y\right)  +\eta\left(  x\right) \, ,\quad x\in\mathbb{R}%
^{d}\, ,\ x>0 \,,\label{B3}%
\end{align}
where given $x=\left(  x_{1},x_{2},\ldots ,x_{d}\right)  ,$ $y=\left(  y_{1}%
,y_{2},\ldots ,y_{d}\right)  $ we recall the previously 
introduced comparison notation: $x<y$ whenever $x\le y$ componentwise, and $x\ne y$.
In particular, 
\[
\int_{\left\{  0<y<x \right\}  } dy=\int_{0}^{x_{1}}dy_{1}\int
_{0}^{x_{2}}dy_{2}\cdots \int_{0}^{x_{d}}dy_{d}\,.
\]

Most of the mathematical analysis of coagulation equations has been made for
one-component systems, i.e., with $d=1$. On the other hand, there are only a few papers
addressing the problem of the coagulation equations with injection terms like
$\sum_{\left\vert \beta\right\vert =1}s_{\beta}\delta_{\alpha,\beta}$ or
$\eta.$ This issue has been discussed in \cite{FLNV} and we refer to that
paper for additional references on earlier related works.

\subsection{Coagulation kernel assumptions}\label{sec:kernelassumptions}

We will restrict our attention to the class of coagulation kernels satisfying
the following inequalities 
\begin{align}\label{Ineq1}
c_{1}\left(  \left\vert \alpha\right\vert +\left\vert \beta\right\vert
\right)  ^{\gamma}\Phi\left(  \frac{\left\vert \alpha\right\vert }{\left\vert
\alpha\right\vert +\left\vert \beta\right\vert }\right)   &  \leq
K_{\alpha,\beta}\leq c_{2}\left(  \left\vert \alpha\right\vert +\left\vert
\beta\right\vert \right)  ^{\gamma}\Phi\left(  \frac{\left\vert \alpha
\right\vert }{\left\vert \alpha\right\vert +\left\vert \beta\right\vert
}\right)   ,\ \alpha,\beta\in\mathbb{N}_{0}^{d}\setminus\{O\}
\\
c_{1}\left(  \left\vert x\right\vert +\left\vert y\right\vert \right)
^{\gamma}\Phi\left(  \frac{\left\vert x\right\vert }{\left\vert x\right\vert
+\left\vert y\right\vert }\right)   &  \leq K\left(  x,y\right)  \leq
c_{2}\left(  \left\vert x\right\vert +\left\vert y\right\vert \right)
^{\gamma}\Phi\left(  \frac{\left\vert x\right\vert }{\left\vert x\right\vert
+\left\vert y\right\vert }\right)  ,\, x,y\in\mathbb{R}_{+}^{d}\setminus\{O\}
\label{Ineq2}%
\end{align}
where $p$, $\gamma\in \R$ are some given parameters, $0<c_{1}\leq c_{2}<\infty$, and we define
\begin{equation}
\Phi\left(  s\right)  =\frac{1}{s^{p}\left(  1-s\right)  ^{p}%
}\ \ \text{for \ }0<s<1 \,.\label{Ineq3}%
\end{equation}
Note that then $\Phi\left(  s\right)  =\Phi\left(  1-s\right)$ for all $0<s<1$, and thus 
the above bounds are symmetric under the exchanges $\alpha\leftrightarrow \beta$ and 
$x\leftrightarrow y$.
We stress that even though the bound functions are isotropic, i.e., 
invariant under permutation of components, the kernels need not to be that.
It would be interesting to explore whether localization still holds for strongly non-isotropic kernels.

The existence of steady states to the problems (\ref{B2}),\ (\ref{B3})
has been considered in \cite{FLNV1}.
It is proven there that in the
multicomponent case and under the assumptions (\ref{Ineq1})--(\ref{Ineq3}),
there exists a stationary solution to (\ref{B2}), (\ref{B3}) if and only if 
\begin{equation}
\gamma+2p<1\, . \label{ExCond}%
\end{equation}
In particular, since for $\gamma+2p\ge 1$ there are no stationary solutions, we can leave those parameter values out of consideration here.
(Further discussion about the physical significance and qualitative explanation for the non-existence of stationary solutions for $\gamma+2p\ge 1$ can be found in \cite{FLNV1}.)

It is now straightforward to check that for the diffusive kernel in 
(\ref{KernBrow}) and assuming (\ref{MasVolIneq}) is valid, the assumptions are satisfied after choosing
$\gamma=0$ and $p=\frac{1}{3}$.  Thus 
the inequality (\ref{ExCond}) then holds, and there exists at least one stationary solution.

\subsection{Asymptotic localization}\label{sec:asymplocintro}

The main result of this paper is  a property of the steady
states to (\ref{B2}), (\ref{B3}) which is specific to the multicomponent
coagulation system, i.e., occurs only if $d>1.$ This property, called here
\textit{localization}\/,  
consists in the fact that the mass in the stationary
solutions to (\ref{B2}), (\ref{B3}) concentrates for large values of the
cluster size $\left\vert \alpha\right\vert $ or $\left\vert x\right\vert $
along a specific direction of the cone ${\mathbb{R}}^{d}_{+}$. More precisely, we can find  some relative width  of the strip, $\zeta>1$, %
such that if $\left\{  n_{\alpha}\right\}  _{\alpha\in\mathbb{N}_{0}%
^{d}\setminus\left\{  O\right\}  }$ or $f$ are solutions to (\ref{B2}) or (\ref{B3}),
respectively, there exists a vector $\theta\in\mathbb{R}_{+}^{d}$
satisfying $\left\vert \theta\right\vert =1$ such that for any $\varepsilon>0$
the following inequalities hold (respectively for $\left\{  n_{\alpha}\right\}
_{\alpha }$ or $f$) 
\begin{equation}
\lim_{R\rightarrow\infty}\frac{\sum_{\left\{  R\leq\left\vert \alpha
\right\vert \leq \zeta R\right\}  \cap\left\{  \left\vert \frac{\alpha}{\left\vert
\alpha\right\vert }-\theta\right\vert <\varepsilon\right\}  }n_{\alpha}}%
{\sum_{\left\{  R\leq\left\vert \alpha\right\vert \leq\zeta R\right\}  }n_{\alpha}%
}=1\ \ \text{or\ \ }\lim_{R\rightarrow\infty}\frac{\int_{\left\{
R\leq\left\vert x\right\vert \leq\zeta R\right\}  \cap\left\{  \left\vert \frac
{x}{\left\vert x\right\vert }-\theta\right\vert <\varepsilon\right\}
}f\left(  dx\right)  }{\int_{\left\{  R\leq\left\vert x\right\vert
\leq\zeta R\right\}  }f\left(  dx\right)  }=1. \label{LocStat}%
\end{equation}

In fact, the direction $\theta$ can be uniquely determined from the source term,
$s_{\beta}$ or $\eta$, in the sense that $\theta_j$ will agree with 
the total relative injection rate of monomers of type $j$.  Thus 
the flux of monomers towards large cluster sizes occurs via clusters
with essentially fixed relative monomer compositions.
Let us remark that \eqref{LocStat} is a non-equilibrium property which cannot be
derived from a variational principle such as by minimization of the free energy or any other thermodynamic potential.  On the contrary, the localization property emerges as a consequence of the 
coagulation dynamics. We also point out that for systems of the form \eqref{B2},
\eqref{B3} the detailed balance property fails.

Asymptotic localization appears to be a very generic feature of
multicomponent coagulation, including time-dependent problems. 
This has been shown to occur for the constant kernel $K=1$ and the additive kernel $K=x+y$ for the discrete coagulation equation 
in \cite{KbN}, using the fact that for these kernels the
solutions to \eqref{B2} can be obtained explicitly by means of the generating function method. In a forthcoming paper \cite{FLNV2}, a similar 
localization property will be
shown for mass conserving solutions of the coagulation equation
(i.e.\ with $\eta=0$ or $s_{\beta}=0$), asymptotically for long times.

\subsection{Main notations and structure}\label{sec:notation}

In this paper we will denote the non-negative real numbers and integers by $\mathbb{R}_{+}:={[}0,\infty {)} $ and $\N_0:= \{0,1,2,\ldots\}$, respectively.  We also 
use a subindex ``$*$'' to denote restriction of real-component vectors $x$ to those which satisfy $x>0$, i.e., for which $x_i>0$ for some $i$.  In particular, we denote
$\R_*:=\R_+\setminus\{0\}$,
$\R^d_*:=\R_+^d\setminus\{O\}$ and $\N^d_*:=\N_0^d\setminus\{O\}$.

Assuming that $X$ is a locally compact Hausdorff space, for example $X=\R^d_*$,
we denote with
$C_{c}\left(X\right) $ the space of compactly supported
continuous functions from $X$ to $\C$, and let $C_0(X)$ denote its completion in the standard sup-norm. Moreover, we will denote as
$\mathcal{M}_{+}(X)$ the collection of non-negative Radon measures on $X$, not necessarily
bounded, and as $\mathcal{M}_{+,b}(X)$ its subset consisting of bounded
measures.  We recall that $ \mathcal{M}_{+}(X) $ can be identified with the space of positive linear functionals on $C_{c}(X)$ via Riesz--Markov--Kakutani Theorem.

We will use indistinctly $\eta(x) dx$ and $\eta(dx)$ to denote elements of the above measure spaces.
The notation $\eta( dx) $ will be preferred when performing integrations or when we want to emphasize that the measure might not be absolutely continuous with respect to the Lebesgue measure.  In addition, ``$dx$'' will often be dropped from the first notation, typically when the measure eventually turns out to be absolutely continuous.
We will also borrow a convenient notation from physics to denote ``Dirac $\delta$''-measures:
if $X$ is locally compact Hausdorff space and $x_0\in X$, we denote the bounded positive Radon measure
defined by the functional $\Lambda_{x_0}[f]=f(x_0)$, $f\in C_{c}(X)$, by ``$\delta(x-x_0)\rmd x$''.

We also use the notation $\cf{P}$ to denote the generic characteristic function of a condition $P$: $\cf{P}=1$ if the condition $P$ is
true, and otherwise $\cf{P}=0$.

The plan of the paper is the following. In Section \ref{StatDef} we introduce the definitions of  stationary injection solutions (continuous and discrete) and constant flux solutions that are considered in this paper and we recall the existence results which have been obtained in \cite{FLNV1}. Section \ref{sec:4} contains the main results on mass localization  in stationary solutions. The proofs are  presented in Sections \ref{sec:localization} and \ref{sec:loca_constantFlux} and they use some technical results that are collected in Section \ref{sec:8}.

\section{Classes of steady state
solutions\label{StatDef}}

\subsection{Stationary injection solutions}

The stationary solutions of (\ref{B2}), (\ref{B3}) are described respectively
by the equations 
\begin{align}
&0=\frac{1}{2}\sum_{\beta<\alpha}K_{\alpha-\beta,\beta}n_{\alpha-\beta}%
n_{\beta}-n_{\alpha}\sum_{\beta>0}K_{\alpha,\beta}n_{\beta}+s_{\alpha}\ \ ,\ \ \alpha\in\mathbb{N}_{*}^{d} \,,\label{S1E5} \\
& 0=\frac{1}{2} \int_{\left\{  0<y <x \right\}  } K\left(  x-y,y\right)
f\left(  x-y,t\right)  f\left(  y,t\right)  dy-\int_{\mathbb{R}_{*}^{d}}K\left(
x,y\right)  f\left(  x,t\right)  f\left(  y,t\right)  dy
\nonumber \\ & \qquad 
+\eta\left(  x\right)
\,, \quad x\in\mathbb{R}_{*}^{d}\,.
\label{eq:contStat}%
\end{align}
We will assume that the sources  $s_{\beta}$ and $\eta$ are compactly supported.
For examples of how to relax the assumptions about the source, we refer to a recent paper \cite{Laurencotpreprint20} where compact support is not required assuming that the solution $f$ is absolutely continuous with respect to the Lebesgue measure.

\begin{definition}\label{StatInjDef}
Let $\eta\in\mathcal{M}_{+,b}\left(  \mathbb{R}_{*}^{d}\right)  $
with support contained in the set $\defset{x\in \R^d_*}{1\le |x|\le L}$ 
for some $L>1.$ Suppose that $K$ is continuous and it satisfies (\ref{Ineq2}),
(\ref{Ineq3}). We will say that $f\in\mathcal{M}_{+}\left(  \mathbb{R}_{*}%
^{d}\right)  $ is a stationary injection solution to (\ref{eq:contStat}) if
$f$ is supported in $\defset{x\in \R^d_*}{|x|\ge 1}$ and satisfies
\begin{equation*}
\int_{\R^d_*} |x|^{\gamma+p} f(dx)
<\infty\label{S1E4}
\end{equation*}
as well as the identity%
\begin{equation}
0=\frac{1}{2}\int_{{{{\mathbb{R}}}_{*}^{d}}}\int_{{{{\mathbb{R}}}_{*}^{d}}%
}K\left(  x,y\right)  \left[  \varphi\left(  x+y\right)  -\varphi\left(
x\right)  -\varphi\left(  y\right)  \right]  f\left(  dx\right)  f\left(
dy\right)  +\int_{{{{\mathbb{R}}}_{*}^{d}}}\varphi\left(  x\right)
\eta\left(  dx\right)  \label{WeakForm1}%
\end{equation}
for any test function 
$\varphi\in C^{1}_c\left(  {{{\mathbb{R}}}_{*}^{d}}\right)$.
\end{definition}

For the definition, we recall the notation $\R^d_*:=\R^d_{+}\setminus \{O\}$ and that 
we use the $\ell_1$-norm in $\R^d_*$, i.e., $|x|=\sum_j |x_j|$.

\bigskip

In order to define stationary injection solutions for the discrete equation
(\ref{S1E5}) we use the fact that the solutions of (\ref{S1E5}) can be
thought as solutions $f$ of (\ref{eq:contStat}) where $f$ is supported at the
elements of $\mathbb{N}_{*}^{d}=\N_0^d\setminus\{O\}$.
More precisely, suppose that the sequence
$\left\{  n_{\alpha}\right\}  _{\alpha\in\mathbb{N}_{*}^{d}}$ is a solution of
(\ref{S1E5}) 
and  assume that there is $L>1$ such that
$s_\alpha=0$ whenever $|\alpha|>L$.
We define 
\begin{equation}
f\left(  x\right)  =\sum_{\alpha\in\mathbb{N}_{*}^{d}}n_{\alpha}\delta\left(
x-\alpha\right)  \label{S1E2}%
\end{equation}
as well as 
\begin{equation}
\eta\left(  x\right)  =\sum_{\alpha\in\mathbb{N}_{*}^{d}}s_{\alpha}%
\delta\left(  x-\alpha\right) \,. \label{S1E3}%
\end{equation}
Then $\eta$ satisfies 
the assumptions of Definition
\ref{StatInjDef} with the same parameter $L$.

With these identifications in mind, we can 
then define a solution of (\ref{S1E5}) as follows:
\begin{definition}
\label{DefStatInjDis}Suppose that $\left\{  s_{\alpha}\right\}  _{\alpha
\in\mathbb{N}_{*}^{d}  }$ is a non-negative sequence supported
in a finite collection of values $\alpha.$ We say that a sequence $\left\{
n_{\alpha}\right\}  _{\alpha\in\mathbb{N}_{*}^{d}
}$, with $n_{\alpha}\geq0$ for $\alpha\in\mathbb{N}_{*}^{d}$, is a stationary injection solution of (\ref{S1E5}) if
\[
 \sum_{\alpha\in\mathbb{N}_{*}^{d}} |\alpha|^{\gamma+p}n_{\alpha} < \infty\,, 
\]
and
the measure $f\in\mathcal{M}_{+}\left(  \mathbb{R}_{*}^{d}\right)  $ defined
by means of (\ref{S1E2}) 
is a solution of (\ref{eq:contStat}) in the sense of Definition
\ref{StatInjDef}.
\end{definition}
Note that the assumptions made about $n_\alpha$ guarantee that the associated measure
$f$ is supported in $\defset{x\in \R^d_*}{|x|\ge 1}$ and satisfies (\ref{S1E4}).

Next we state the existence of stationary injection solutions to (\ref{eq:contStat}) and (\ref{S1E5}) for kernels $K$ with $\gamma+2p<1.$   This result has been proved in \cite{FLNV1} and it is a natural extension from one- to multi-component systems of the result contained in \cite{FLNV}. 
\begin{theorem}\label{ThmExistCont}
Suppose that $K$ is a continuous symmetric function that satisfies
(\ref{Ineq2}), (\ref{Ineq3}) with $\gamma+2p<1.$ We have the following results:
\begin{itemize}
\item[(i)] Suppose that 
$\eta\in\mathcal{M}_{+,b}\left(  \mathbb{R}_{*}^{d}\right)  $
is supported inside the set $\defset{x\in \R^d_*}{1\le |x|\le L}$ for some $L>1$. 
Then, there exists a stationary injection solution $f\in\mathcal{M}%
_{+}\left(  \mathbb{R}_{*}^{d}\right)  $ to (\ref{eq:contStat}) in the sense
of Definition \ref{StatInjDef}. 
\item[(ii)]\label{ThmExistDisc}
Suppose that $K$ is symmetric and satisfies (\ref{Ineq1}), (\ref{Ineq3})
with $\gamma+2p<1.$ Let $\left\{  s_{\alpha}\right\}  _{\alpha\in
\mathbb{N}_{*}^{d}  }$ be a non-negative sequence
supported on a finite number of values $\alpha.$ Then, there exists a
stationary injection solution $\left\{  n_{\alpha}\right\}  _{\alpha
\in\mathbb{N}_{*}^{d}  }$ to (\ref{S1E5}) in the
sense of Definition \ref{DefStatInjDis}.
\end{itemize}
\end{theorem}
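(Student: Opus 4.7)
The plan is to adapt the one-component strategy of \cite{FLNV} and prove (i) first; part (ii) will then follow by specializing to atomic measures. The overall approach is a regularize--estimate--compactness argument: for each cutoff parameter $R>L$, construct a stationary solution $f_R$ of an appropriately truncated coagulation equation, derive a uniform moment bound
\[
\int_{\R_*^d}|x|^{\gamma+p}\,f_R(dx)\le C,
\]
and pass to a vaguely convergent subsequence $f_R\to f$ whose limit satisfies the weak identity \eqref{WeakForm1}.

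For the regularization, I would freeze coagulation above size $R$: replace $K$ by $K_R(x,y)=K(x,y)\cf{|x|+|y|\le R}$. The truncated stationary equation is
\[
\tfrac{1}{2}\!\int\!\!\int K_R(x,y)[\varphi(x{+}y)-\varphi(x)-\varphi(y)]f_R(dx)f_R(dy)+\int\varphi\,\eta(dx)=0.
\]
Since $K_R$ is bounded on the bounded set $\{|x|\le R\}$, existence of $f_R\in \mathcal{M}_{+,b}$ supported in $\{1\le |x|\le R\}$ can be obtained via a fixed-point/dynamic scheme: solve the time-dependent truncated problem with zero initial data (for which standard Picard/Banach arguments apply because the kernel is bounded), show that the associated evolution is dissipative towards bounded sets in the total-mass norm (mass accumulates at rate $\|\eta\|$ and is limited by the cutoff), and extract a stationary state either via Tikhonov's fixed-point theorem on a suitable invariant convex set of $\mathcal{M}_{+,b}$ equipped with the weak-$\ast$ topology, or by averaging the orbit in time (a Krylov--Bogolyubov type argument) and using the compactness of bounded families of Radon measures with controlled support.

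The central step is the uniform-in-$R$ bound on $\int |x|^{\gamma+p}f_R(dx)$. Testing the stationary weak formulation with $\varphi(x)=|x|\,\chi(x)$ where $\chi$ is a smooth cutoff localizing to $\{|x|\le M\}$, $M<R$, the symmetric coagulation term vanishes on the diagonal region $\{|x|+|y|\le M\}$, so only a boundary contribution and a ``lost mass'' term survive; this yields a flux-type identity bounding the flow of $|x|$-mass across $|x|=M$ by $\int|x|\,\eta(dx)$. To convert this flux bound into a control on the $|x|^{\gamma+p}$ moment, test with $\varphi(x)=|x|^{\gamma+p}\wedge A$ and use the lower bound
\[
K(x,y)\ge c_1(|x|+|y|)^{\gamma}\Phi\!\bigl(\tfrac{|x|}{|x|+|y|}\bigr)\ge c_1\bigl(|x|^{\gamma+p}|y|^{-p}+|x|^{-p}|y|^{\gamma+p}\bigr),
\]
together with the sign of the bracket $\varphi(x{+}y)-\varphi(x)-\varphi(y)$. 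The hypothesis $\gamma+2p<1$ ensures that the negative quadratic contribution dominates the positive linear source contribution $\int\varphi\,\eta$, which gives $\int (|x|^{\gamma+p}\wedge A)f_R(dx)\le C$ uniformly in $A$ and $R$; monotone convergence then concludes the moment bound. This is the main technical obstacle: the argument requires a careful splitting of the convolution domain to separate the ``coagulation destroys small clusters'' mechanism from the ``small times large'' contribution, and it is exactly here that $\gamma+2p<1$ enters, as in \cite{FLNV}.

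With the uniform bound in hand, Prokhorov-type arguments on $\mathcal{M}_+(\R_*^d)$ (the mass is concentrated on $|x|\ge 1$ by construction, and moments control the tail $|x|\to\infty$) yield $f_R\to f$ vaguely along a subsequence. The weak identity \eqref{WeakForm1} passes to the limit for $\varphi\in C^1_c(\R_*^d)$ because for each such $\varphi$ with $\supp\varphi\subset\{|x|\le M\}$ the truncation $K_R$ coincides with $K$ on the relevant integration region once $R>2M$, and the integrand is bounded by $C|\varphi|_\infty K(x,y)$, for which the $|x|^{\gamma+p}$ moment bound gives equi-integrability. For part (ii), I would run the same construction starting from an atomic source \eqref{S1E3} on $\N_*^d$: solve the truncated finite-dimensional ODE system of stationary relations (which by the above is a fixed-point problem on a compact convex subset of $\R^{N}_+$) and pass $R\to\infty$, with the limit remaining atomic because the regularized problems preserve the $\N_*^d$-lattice support.
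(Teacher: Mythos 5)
First, note that the paper you are working from does not actually prove Theorem \ref{ThmExistCont}: it is quoted from the companion paper \cite{FLNV1}, as an extension of the one-component construction in \cite{FLNV}. Your overall strategy (truncate, obtain a stationary solution of the truncated problem by a Tikhonov/Krylov--Bogolyubov argument, derive a uniform $|x|^{\gamma+p}$-moment bound from a flux identity, and pass to the limit by weak-$*$ compactness) is indeed the strategy of those references, and part (ii) does reduce to part (i) exactly as you say, via the atomic measures \eqref{S1E2}--\eqref{S1E3}.

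There is, however, a concrete gap in your regularization step. With the symmetric truncation $K_R(x,y)=K(x,y)\cf{|x|+|y|\le R}$, the gain and loss terms are cut off identically, so the truncated coagulation operator still conserves the first moment: testing with $\varphi(x)=|x|$ makes the bracket $\varphi(x+y)-\varphi(x)-\varphi(y)$ vanish, and the stationary identity forces $\int |x|\,\eta(dx)=0$, contradicting $\eta\neq0$. Dynamically, clusters with $|x|$ close to $R$ can no longer coagulate with anything of size $\ge 1$, the support stays inside $\{|x|\le R\}$, and the total mass grows linearly in time; your claim that the truncated evolution is ``dissipative towards bounded sets'' is therefore false, and the truncated problem admits no stationary solution at all. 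The construction in \cite{FLNV,FLNV1} uses an \emph{asymmetric} truncation: the loss term is kept for all collisions while the gain term discards products with $|x+y|>R$ (equivalently, clusters exceeding the cutoff are removed from the system). Only then does the first-moment balance read ``injection rate $=$ mass removal rate at the top,'' which is what makes a stationary state of the regularized problem possible and simultaneously produces the uniform flux bound $A(z)\le |J_0|$ on which the dyadic estimate \eqref{S3E4} — and hence the uniform $(\gamma+p)$-moment bound, which is finite precisely when $\gamma+2p<1$ — is built. With that correction, the remainder of your outline (moment bound via the kernel lower bound and Lemma \ref{lem:bound}, equi-integrability, and passage to the limit in \eqref{WeakForm1}) matches the cited proof.
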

\begin{remark}
Let us point out that no uniqueness of the solutions
is claimed in Theorems \ref{ThmExistCont}, \ref{ThmExistDisc}. The issue of uniqueness of stationary injection
solutions is an interesting open problem. Moreover, we remark that restrictions for the values of $\gamma$ an $p$ in Theorem \ref{ThmExistCont} are not only
sufficient to have stationary injection solutions, but they are also necessary. Indeed, for values of $\gamma$ an $p$ such that $\gamma+2p\geq1$ it is possible to prove that no such solutions exist (cf.~\cite{FLNV1}). 
\end{remark} 

\medskip

\subsection{Change to $(r,\theta)-$variables}\label{sec:change_var}

The definition of constant flux solution presented in the next section, as well as the main results of this paper are more conveniently written in  the coordinates $\left(  r,\theta\right)$
with $r=|x|>0$ and $\theta = \frac{1}{|x|}x\in \Delta^{d-1}$ for $x\in \R^d_*$, where  $\Delta^{d-1}$ denotes the simplex
\begin{equation*}
\label{eq:simplex}
\Delta^{d-1}=\left\{  \theta\in\mathbb{R}_{*}^{d} :\left\vert \theta\right\vert =1 \right\}.
\end{equation*}
We use the variables $\theta_{1},\theta_{2},\ldots ,\theta_{d-1}$ to
parametrize the simplex and set then 
$
\theta_{d}=1-\sum_{j=1}^{d-1}\theta_{j}\,.
$
Thus, the change of variables is $x\rightarrow\left(  r,\theta_{1},\theta
_{2},\ldots ,\theta_{d-1}\right)$. This map is a bijection from $\R^d_*$ to $\R_*\times \Delta^{d-1} $ and the inverse map  $\R_*\times \Delta^{d-1} \to \R^d_* $ is defined by $x=r\theta ,\ r>0,\ \theta\in\Delta^{d-1}$.

We now rewrite equation \eqref{eq:contStat} using this change of variables. For further details we refer to \cite{FLNV1}.
Computing the determinant of the Jacobian of the mapping $\left(  r,\theta\right) \rightarrow x,$ we obtain
\[
dx=\left\vert \frac{\partial\left(  x_{1},x_{2},\ldots ,x_{d}\right)  }%
{\partial\left(  r,\theta_{1},\theta_{2},\ldots ,\theta_{d-1}\right)  }\right\vert
drd\theta_{1}d\theta_{2}\ldots d\theta_{d-1}=r^{d-1}drd\theta_{1}d\theta
_{2}\ldots d\theta_{d-1}.
\]
Rewriting  $d\theta_{1}d\theta_{2}\ldots d\theta_{d-1}$ in terms of the area
element of the simplex, denoted by $d\tau\left(  \theta\right)$, yields
$d\tau\left(  \theta\right)  =\sqrt{d}\, d\theta_{1}d\theta_{2}\ldots d\theta_{d-1}$ and
\begin{equation*}
dx=\frac{r^{d-1}}{\sqrt{d}}drd\tau\left(  \theta\right) \,. \label{Jac}%
\end{equation*}
Here we used the formula $d\tau\left(  \theta\right)  =\sqrt{1+\left(  \nabla_{\theta}h\right)  ^{2}%
}d\theta_{1}d\theta_{2}\ldots d\theta_{d-1}$ with $\theta_{d}=h\left(  \theta_{1},\theta_{2},\ldots ,\theta_{d-1}\right):= 1-\sum_{j=1}^{d-1}\theta_{j}  .$

Suppose that $x=r\theta$ and $y=\rho\sigma.$
We define the kernel function in these new variables by
\begin{equation}
G\left(r,\rho;\theta,\sigma\right)= K\left(r\theta,\rho\sigma\right)\,,  \label{G1}%
\end{equation}%
and replace the measure $f$ by  
the measure $F\in \mathcal{M}_+(\R_*\times \Delta^{d-1})$ which is
uniquely fixed by the requirement that 
\begin{equation}
\int \psi(r,\theta)\frac{r^{d-1}}{\sqrt{d}} F(r,\theta)drd\tau\left(  \theta\right) 
=\int \psi(|x|,x/|x|) f(x) dx\,, \label{G2}
\end{equation}
for all test functions $\psi\in  C_c(\R_*\times \Delta^{d-1})$.
The normalization with the Jacobian is made to guarantee that, if $f$ is absolutely continuous, then
the respective density functions transform as expected.

Then, 
we obtain that (\ref{WeakForm1}) is equivalent to
\begin{align}
&  \frac{1}{2d}\int_{\mathbb{R}_{*}}r^{d-1}dr\int_{\mathbb{R}_{*}}\rho
^{d-1}d\rho\int_{\Delta^{d-1}}d\tau\left(  \theta\right)  \int_{\Delta^{d-1}%
}d\tau\left(  \sigma\right)  G\left(  r,\rho;\theta,\sigma\right)  F\left(
r,\theta\right)  F\left(  \rho,\sigma\right) \nonumber\\
&  \times\left[  \psi\left(  r+\rho,\frac{r}{r+\rho}\theta+\frac{\rho}{r+\rho
}\sigma\right)  -\psi\left(  r,\theta\right)  -\psi\left(  \rho,\sigma\right)
\right]  +\int_{{\mathbb{R}_{*}^{d}}}\varphi\left(  x\right)  \eta\left(
dx\right)  =0\,.\label{weak_rtheta}%
\end{align}
for all $j=1,2,\ldots ,d$ and $\psi\in C_c(\R_*\times \Delta^{d-1})$ and  using $\varphi(x) = \psi(r,\theta)$ and $\varphi(y) = \psi(\rho,\sigma)$. 

Note  that the change of measure in \eqref{G2} from $f$ to $F$ should be understood 
via the Riesz--Markov--Kakutani Theorem applied to the linear functional
\[
\psi \mapsto  \int_{\R^d_*} \frac{\sqrt{d}}{|x|^{d-1}}
\psi(|x|,x/|x|) f(x) dx\,, \qquad \psi \in C_c(\R_*\times \Delta^{d-1})\,.
\]
Clearly, if $F(r,\theta)dr d\tau(\theta)$ denotes the unique non-negative Radon measure associated with this functional, then 
$F$ satifies also (\ref{G2}).  Note also that if $f$ satisfies the assumptions in 
Definition \ref{StatInjDef}, we also have that the support of $F$ 
lies in $[1,\infty)\times \Delta^{d-1}$ and $F$ satisfies
\[
 \int_{\R_*\times \Delta^{d-1}} r^{d-1+\gamma+p} F(r,\theta)dr d\tau(\theta)
<\infty \,.
\]

\subsection{Constant flux solutions}

Finally, let us detail also the definition of the constant flux solutions associated to
the equation (\ref{eq:contStat}) with $\eta=0$, i.e., to the equation
\begin{equation}
0=\frac{1}{2}\int_{\left\{ 0<y<x\right\} }K\left(  x-y,y\right)  f\left(
x-y,t\right)  f\left(  y,t\right)  dy-\int_{\mathbb{R}_{*}^{d} }K\left(
x,y\right)  f\left(  x,t\right)  f\left(  y,t\right)  dy\,. \label{S1E7}%
\end{equation}

\begin{definition}\label{DefConstFlux}
Suppose that $K$ is continuous and it satisfies
(\ref{Ineq2}), (\ref{Ineq3}). 
We say that $f\in\mathcal{M}\left(
\mathbb{R}_{*}^{d}\right)  $ is a stationary solution of (\ref{S1E7}) 
if 
\begin{equation*}
\int_{\defset{x\in \R^d_*}{|x|\ge 1}} |x|^{\gamma+p} f(dx)
+
\int_{\defset{x\in \R^d_*}{|x| < 1}} |x|^{1-p} f(dx)
<\infty\label{cond_constFlux}
\end{equation*}
 is satisfied and
(\ref{WeakForm1}) holds with $\eta=0$,  
for every test function 
$\varphi\in C^{1}_c\left(  {{{\mathbb{R}}}_{*}^{d}}\right)$.

We then define the \defem{total flux across
the surface $\left\{  \left\vert x\right\vert =R\right\}  $}
as the vector-valued function $A(R)\in \R_+^d$, $R>0$, defined by
\begin{equation}
A_j(R) =  \frac{1}{d}
\int_{(0,R]\times \Delta^{d-1}}dr d\tau\left(  \theta\right)\int_{(R-r,\infty)\times \Delta^{d-1}} d\rho d\tau\left(  \sigma\right)
r^{d}\rho^{d-1}
F\left(  r,\theta\right)  F\left(  \rho,\sigma\right)  \theta_j G\left(
r,\rho;\theta,\sigma\right) \,,\label{S1E8}
\end{equation}
where the function $G$ is defined as in (\ref{G1})
and the measure $F$ using (\ref{G2}), as explained above.
We say that $f$ is a \defem{non-trivial constant flux solution} of (\ref{S1E7}) if it is a stationary solution and there is  
$J_0>O$ 
such that $A(R)=J_0$ for all  $R>0$.
\end{definition}

\begin{remark} 
The flux (\ref{S1E8}) is obtained by considering in \eqref{weak_rtheta} the test function $\psi(r,\theta)=r\chi_\delta(r)$, with $\chi_\delta(r) \in C_c^\infty(\R_*)$ such that $\chi_\delta(r) \in [0,1]$, $\chi_\delta(r)=1$ for $r \in [1,z]$ and $\chi_\delta(r)=0$ for $r \geq z+\delta$ and computing the limit when $\delta\to 0$ following similar arguments as in the proof of Lemma 2.7 in \cite{FLNV}.
 We refer to \cite{FLNV1} for the details of the computations.
Note that in the one-component equation the fluxes being constant implies that $f$ is a solution to the coagulation equation. The same is not true in the multicomponent equation, which justifies the need to assume \eqref{WeakForm1}.
\end{remark}

\begin{remark}
Assume that the kernel $K$ is continuous and homogeneous with homogeneity $\gamma$. If $K$  satisfies (\ref{Ineq2}), (\ref{Ineq3}) with $\gamma+2p<1$, then one can show (cf. \cite{FLNV1}) that a family of constant flux solutions of (\ref{S1E7}) is given by the following weighted Dirac  
$\delta$-measures 
\begin{equation}
F\left(  r,\theta\right)  =\frac{C_{0}}{r^{\frac{\gamma
+1}{2}+d}}\delta\left(  \theta-\theta_{0}\right)  ,\quad C_{0}>0\,, \label{C4}%
\end{equation}
where $\theta_{0}\in \Delta^{d-1}$ is fixed but arbitrary.
\end{remark}

\bigskip

\section{Main result}
\label{sec:4}

The main  result of this paper is a rigorous proof of the so-called
\textit{localization} in the above described stationary solutions. 
It turns out that  all solutions to
(\ref{S1E5}), (\ref{eq:contStat}) concentrate along a straight line as
$\left\vert x\right\vert $ or $\left\vert \alpha\right\vert $ tend to infinity
respectively. 

\begin{theorem}
\label{LocContInj}Suppose that $K$ is a continuous symmetric function that satisfies
(\ref{Ineq2}), (\ref{Ineq3}) with $\gamma+2p<1.$ 
Suppose that 
$\eta\in\mathcal{M}_{+,b}\left(  \mathbb{R}_{*}^{d}\right)  $
is supported inside the set $\defset{x\in \R^d_*}{1\le |x|\le L}$ for some $L>1$ and 
$\eta\neq 0$.
Let $f\in\mathcal{M}_{+}\left(
\mathbb{R}_{*}^{d}\right)  $ be a stationary injection solution of
(\ref{eq:contStat}) in the sense of Definition \ref{StatInjDef} and let $F$ be
defined by means of (\ref{G2}). Then, there exists $b \in (0,1)$ and a function 
$\delta
:\mathbb{R}_{*}\rightarrow\mathbb{R}_{+}$ with $\lim_{R\rightarrow\infty
}\delta\left(  R\right)  =0$ such that 
\begin{equation}
\lim_{R\rightarrow\infty}\left(  \frac{\int_{\left[  R,R/b\right]  }%
dr\int_{\Delta^{d-1}\cap\left\{  \left\vert \theta-\theta_{0}\right\vert
\leq\delta\left(  R\right)  \right\}  }d\tau\left(  \theta\right)  F\left(
r,\theta\right)  }{\int_{\left[  R,R/b\right]  } dr\int_{\Delta^{d-1}}d\tau\left(
\theta\right)  F\left(  r,\theta\right)  }\right)  =1 \label{S2E4}%
\end{equation}
where 
\begin{equation}
\theta_{0}=\frac{\int_{\mathbb{R}_{*}^{d}}x\eta\left(  x\right)  dx}%
{\int_{\mathbb{R}_{*}^{d}}\left\vert x\right\vert \eta\left(  x\right)  dx} \in \Delta^{d-1}\,.
\label{S2E5}%
\end{equation}

\end{theorem}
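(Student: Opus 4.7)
The plan is to carry out three steps: (i) identify $\theta_0$ with the asymptotic flux direction, (ii) derive a global angular-variance identity by a careful choice of test function in (\ref{WeakForm1}), and (iii) localize this identity to a dyadic shell to obtain a measure concentration bound which, combined with the scaling input from \cite{FLNV1}, gives (\ref{S2E4}). For step (i), I would apply (\ref{WeakForm1}) to smooth approximations of $\varphi(x) = x_j\, \cf{|x| \leq R}$, as in the remark after Definition \ref{DefConstFlux}; this yields $A_j(R) = \int x_j\, \eta(dx)$ for every $R > L$, so that $A(R)/|A(R)| = \theta_0$ is constant beyond the support of $\eta$.

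For step (ii), I would test (\ref{WeakForm1}) against $\varphi(x) = |x|\, \norm{x/|x| - \theta_0}^2$, with $\norm{\cdot}$ the Euclidean norm, suitably truncated and smoothed to lie in $C^1_c(\R^d_*)$. An elementary algebraic computation, using $|x+y| = |x|+|y|$ in the positive cone, gives
\begin{equation*}
 \varphi(x+y) - \varphi(x) - \varphi(y) \,=\, -\frac{|x||y|}{|x|+|y|}\, \norm{\frac{x}{|x|} - \frac{y}{|y|}}^2,
\end{equation*}
and removing the truncation using the tail estimates on $F$ from \cite{FLNV1} produces the key identity
\begin{equation*}
 \frac{1}{2}\int\!\!\int K(x,y)\, \frac{|x||y|}{|x|+|y|}\, \norm{\frac{x}{|x|} - \frac{y}{|y|}}^2 f(dx)\,f(dy) \,=\, \int |x|\, \norm{\frac{x}{|x|} - \theta_0}^2 \eta(dx) \,=:\, C_\eta \,<\, \infty\,.
\end{equation*}

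For step (iii), I would restrict the double integral on the left to pairs with $|x|, |y| \in [R, R/b]$. On this range, (\ref{Ineq2}) gives $K(x,y) \gtrsim R^\gamma$ and $|x||y|/(|x|+|y|) \geq (b/2)\, R$, so introducing the angular measure $\mu_R(d\theta) := \int_R^{R/b} F(r,\theta)\, r^{d-1}\, dr\, d\tau(\theta)$ on $\Delta^{d-1}$ one obtains $\int\!\!\int \norm{\theta - \sigma}^2\, \mu_R(d\theta)\, \mu_R(d\sigma) \leq C'/R^{\gamma+1}$. Rewritten as $2\, \mu_R(\Delta^{d-1}) \int \norm{\theta - \bar\theta_R}^2 \mu_R(d\theta)$ with $\bar\theta_R$ the barycentre of $\mu_R$, and combined with a lower bound $\mu_R(\Delta^{d-1}) \cdot R^{\gamma+1} \to \infty$ inherited from the scaling estimates of \cite{FLNV1}, Chebyshev's inequality gives the concentration of $\mu_R/\mu_R(\Delta^{d-1})$ at $\bar\theta_R$ at a rate $\delta(R) \to 0$. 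Substituting this localization into (\ref{S1E8}) and matching with step~(i) then forces $\bar\theta_R \to \theta_0$, yielding (\ref{S2E4}) after invoking equivalence of norms on $\R^d$.

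The main obstacle I anticipate is the input from \cite{FLNV1} used in step~(iii): establishing $\mu_R(\Delta^{d-1}) \cdot R^{\gamma+1} \to \infty$ and ensuring that the flux representation (\ref{S1E8}) is asymptotically dominated by contributions from the shell $[R, R/b]$ both depend on the quantitative decay of $F$ proved there. Additionally, the removal of the truncation in step~(ii) must be handled with care so as not to spoil the non-negative sign of the left-hand side, which is what allows the variance reading of the identity in step~(iii).
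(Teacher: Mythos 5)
Your overall strategy (test against $|x|$ times a quadratic function of the direction to get a global angular-variance bound, feed it into the growth estimates of \cite{FLNV1}, conclude concentration, then identify the direction via the flux) is the same as the paper's, but the key quantitative step in your part (iii) fails as written. From the global bound you restrict to the shell $\{|x|,|y|\in[R,R/b]\}$ and obtain $\iint\|\theta-\sigma\|^2\,\mu_R(d\theta)\mu_R(d\sigma)\le C'/R^{\gamma+1}$, and you then invoke ``$\mu_R(\Delta^{d-1})\cdot R^{\gamma+1}\to\infty$''. Two problems: first, the quantity you actually need to diverge is $\mu_R(\Delta^{d-1})^{2}\,R^{\gamma+1}$, since the variance of the \emph{normalized} measure is the double integral divided by $\mu_R(\Delta^{d-1})^{2}$; second, Proposition \ref{UppEstCont} gives the two-sided estimate $\mu_R(\Delta^{d-1})\asymp\sqrt{|J_0|}\,R^{-(\gamma+1)/2}$, so $\mu_R(\Delta^{d-1})^{2}R^{\gamma+1}$ is bounded above and below by constants — it does not diverge (and $\mu_R(\Delta^{d-1})R^{\gamma+1}\sim R^{(\gamma+1)/2}$ diverges only when $\gamma>-1$, whereas the theorem allows any $\gamma<1$ after the reduction to $p=0$). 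As stated, your argument only yields a \emph{bounded} normalized variance, which gives no concentration. The fix — and this is exactly how the paper proceeds — is to observe that the restricted double integral is the tail of the convergent global integral, hence is not merely $O(R^{-(\gamma+1)})$ but $\varepsilon(R)\,R^{-(\gamma+1)}$ with $\varepsilon(R)\to0$ by dominated convergence; the powers of $R$ cancel exactly against the lower bound on $\mu_R(\Delta^{d-1})^2$ and all the decay of the normalized variance comes from $\varepsilon(R)$.

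Two smaller points. In part (ii), the untruncated relation is not an identity: removing the cutoff leaves a boundary term equal to the outgoing flux of the $\|\theta-\theta_0\|^2$-weighted mass at scale $R$, which need not vanish a priori; only the inequality $\le C_\eta$ survives, which is fortunately all you use (you flag this, correctly). In part (iii), ``substituting the localization into (\ref{S1E8}) and matching with step (i)'' to force $\bar\theta_R\to\theta_0$ hides the real work: the flux across $|x|=R$ receives contributions from very asymmetric pairs (e.g.\ $r\ll R\le r+\rho$), and showing these are negligible requires a splitting of the $(r,\rho)$-region and separate estimates on each piece (the paper's $\Omega_1,\Omega_2,\Omega_3$ decomposition, with the $\Omega_1$ piece needing a uniform-in-$R$ smallness estimate). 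You acknowledge this dependency but it remains an unproved step of comparable difficulty to the concentration argument itself.
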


\bigskip

Notice that Theorem \ref{LocContInj} implies that any stationary injection
solution to (\ref{eq:contStat}) concentrates for large values of $\left\vert
x\right\vert $ along the direction $x=\left\vert x\right\vert \theta_{0}.$ A
similar result holds for the discrete problem, since the solutions to
(\ref{S1E5}) are solutions to (\ref{eq:contStat}) supported in the points with
integer coordinates $\mathbb{N}_{*}^{d}  .$
Although the following  result is a Corollary of Theorem \ref{LocContInj},  we have preferred to
formulate it separately due to the fact that the discrete coagulation
equations have an independent interest in many applications in  aerosol science. It is worth to remark that an
analogue of the solution \eqref{C4} does not exist for the discrete
Smoluchowski coagulation equation.

\begin{theorem}
\label{LocDiscInj}Suppose that $K$ is symmetric and satisfies (\ref{Ineq1}), (\ref{Ineq3}) with
$\gamma+2p<1.$ 
Let $\left\{  s_{\alpha}\right\}  _{\alpha\in\mathbb{N}_{*}^{d}}$ 
be a nonnegative sequence supported on a
finite number of values $\alpha$, and assume the sequence is not identically zero. Suppose that
$\left\{  n_{\alpha}\right\}  _{\alpha\in\mathbb{N}_{*}^{d}  }$ 
is a stationary injection solution to (\ref{S1E5}) in the sense
of Definition \ref{DefStatInjDis}. Then, there exists $b \in(0,1)$ and a function
$\delta:\mathbb{R}_{*}\rightarrow\mathbb{R}_{+}$ with $\lim_{R\rightarrow
\infty}\delta\left(  R\right)  =0$ such that 
\begin{equation}
\lim_{R\rightarrow\infty}\left(  \frac{\sum_{\left\{  R\leq\left\vert
\alpha\right\vert \leq R/b\right\}  \cap\left\{  \left\vert \frac{\alpha
}{\left\vert \alpha\right\vert }-\theta_{0}\right\vert \leq\delta\left(
R\right)  \right\}  }n_{\alpha}}{\sum_{\left\{  R\leq\left\vert \alpha
\right\vert \leq R/b\right\}  }n_{\alpha}}\right)  =1 \label{S2E6}%
\end{equation}
where 
\begin{equation}
\theta_{0}=\frac{\sum_{\alpha\in\mathbb{N}_{*}^{d}
}s_{\alpha}\alpha}{\sum_{\alpha\in\mathbb{N}_{*}^{d} }s_{\alpha}\left\vert \alpha\right\vert }\in \Delta^{d-1}\,. \label{S2E7}%
\end{equation}

\end{theorem}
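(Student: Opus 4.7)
\noindent\emph{Proof proposal.}
The plan is to deduce this theorem directly from Theorem~\ref{LocContInj} by embedding the discrete problem into the continuous setting. Given the discrete stationary injection solution $\{n_\alpha\}$ with source $\{s_\alpha\}$, I first form the atomic Radon measures
\[
f(dx)=\sum_{\alpha\in\mathbb{N}_{*}^{d}} n_\alpha\,\delta(x-\alpha)\,dx,\qquad
\eta(dx)=\sum_{\alpha\in\mathbb{N}_{*}^{d}} s_\alpha\,\delta(x-\alpha)\,dx,
\]
as in \eqref{S1E2} and \eqref{S1E3}. By Definition~\ref{DefStatInjDis}, $f$ is then a stationary injection solution of \eqref{eq:contStat} in the sense of Definition~\ref{StatInjDef}, and since $\{s_\alpha\}$ is nonzero and finitely supported in $\mathbb{N}_{*}^{d}$, the measure $\eta$ is nonzero, bounded, and supported in $\{x\in\mathbb{R}_{*}^{d}:1\le|x|\le L\}$ for any $L>\max\{|\alpha|:s_\alpha\neq 0\}$. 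All hypotheses of Theorem~\ref{LocContInj} are thus satisfied.

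Applying Theorem~\ref{LocContInj} then produces $b\in(0,1)$, a function $\delta:\mathbb{R}_{*}\to\mathbb{R}_{+}$ with $\delta(R)\to 0$, and a direction $\theta_0\in\Delta^{d-1}$ given by \eqref{S2E5}, such that the measure $F$ associated to $f$ via \eqref{G2} satisfies \eqref{S2E4}. Since $\eta$ is atomic, $\int x\,\eta(dx)=\sum_{\alpha} s_\alpha\,\alpha$ and $\int|x|\,\eta(dx)=\sum_{\alpha} s_\alpha\,|\alpha|$, so this $\theta_0$ coincides with the discrete expression \eqref{S2E7}.

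It remains to translate the $F$-localization \eqref{S2E4} into the discrete localization \eqref{S2E6}. Unwinding the defining identity \eqref{G2} and using that $F$ is a Radon measure, I would check that for every Borel set $B\subseteq\mathbb{R}_{*}\times\Delta^{d-1}$,
\[
\int_{B}\frac{r^{d-1}}{\sqrt{d}}\,F(dr\,d\tau(\theta))=\sum_{\alpha\in\mathbb{N}_{*}^{d}:\,(|\alpha|,\alpha/|\alpha|)\in B} n_\alpha .
\]
For $r\in[R,R/b]$ the weight $r^{d-1}$ is pinched between $R^{d-1}$ and $(R/b)^{d-1}$, so applying the upper bound to the ``bad'' sum over $\{\abs{\alpha/|\alpha|-\theta_0}>\delta(R)\}$ and the lower bound to the total sum over $R\le|\alpha|\le R/b$ yields
\[
\frac{\sum_{R\le|\alpha|\le R/b,\ \abs{\alpha/|\alpha|-\theta_0}>\delta(R)} n_\alpha}{\sum_{R\le|\alpha|\le R/b} n_\alpha}
\le b^{-(d-1)}\,\frac{\int_{[R,R/b]}dr\int_{\Delta^{d-1}\cap\{|\theta-\theta_0|>\delta(R)\}}d\tau(\theta)\,F(r,\theta)}{\int_{[R,R/b]}dr\int_{\Delta^{d-1}}d\tau(\theta)\,F(r,\theta)}.
\]
The right-hand side tends to $0$ as $R\to\infty$ by \eqref{S2E4}, which is equivalent to \eqref{S2E6}.

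The argument is essentially routine once Theorem~\ref{LocContInj} is available; the only mild obstacle is bookkeeping the Jacobian weight $r^{d-1}$ that distinguishes discrete sums of $n_\alpha$ from integrals against $F$. Since $r$ varies only over the bounded dilation $[R,R/b]$ with $b$ fixed, this weight contributes at most the harmless factor $b^{-(d-1)}$ and cannot destroy the vanishing of the ratio; hence the substantive content of the statement is already contained in Theorem~\ref{LocContInj}.
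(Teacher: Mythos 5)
Your proposal is correct and follows essentially the same route as the paper, which likewise proves Theorem \ref{LocDiscInj} by viewing $\{n_\alpha\}$ and $\{s_\alpha\}$ as the atomic measures \eqref{S1E2}, \eqref{S1E3} and invoking Theorem \ref{LocContInj}. Your explicit bookkeeping of the Jacobian weight $r^{d-1}$ (pinched between $R^{d-1}$ and $(R/b)^{d-1}$ on the dilation window, costing only the factor $b^{-(d-1)}$) is a correct and slightly more detailed account of the step the paper dismisses as "just a consequence of \eqref{S2E4}".
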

\bigskip

We will prove also that all constant flux solutions to (\ref{S1E7}) are 
supported along a half-line $\left\{  x=r\theta_{0}:r>0\right\}  $ and their
analysis can be reduced to the constant flux solutions in the one-component
case $d=1.$ More precisely we have:
\begin{theorem}
\label{LocConstFlux}
Suppose that $K$ is a continuous symmetric function that satisfies
(\ref{Ineq2}), (\ref{Ineq3}) with $\gamma+2p<1$. 
Suppose that $f\in
\mathcal{M}_{+}\left(  \mathbb{R}_{*}^{d}\right)  $ is a constant flux
solution to (\ref{S1E7}) in the sense of Definition \ref{DefConstFlux} with total flux $J_0 > O$. We 
define $F$ as in (\ref{G2}) and $G$ as in (\ref{G1}).   Then, 
\[
F\left(  r,\theta\right)  =\frac{H\left(  r\right)  }{r^{d-1}}\delta\!\left(
\theta-\theta_{0}\right)
\]
where $\theta_0 = J_0/|J_0| \in \Delta^{d-1}$ and $H\in\mathcal{M}_{+}\left(\mathbb{R}_{*}\right)$ is a constant flux solution for the 
one-component
coagulation equation with the kernel $\tilde{K}\left(  r,\rho\right)  =G\left(
r,\rho;\theta_{0},\theta_{0}\right)  .$
\end{theorem}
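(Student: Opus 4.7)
The plan is to combine an asymptotic localization property for constant flux solutions with a rescaling/rigidity argument to obtain exact localization on a single half-line, and then to reduce the problem to a one-component coagulation equation by direct substitution.

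First, I would establish an asymptotic localization property for constant flux solutions, analogous to Theorem \ref{LocContInj}: there exist $b \in (0, 1)$, a function $\delta : \R_* \to \R_+$ with $\delta(R) \to 0$, and a direction $\theta^\star \in \Delta^{d-1}$ such that
\begin{equation*}
\lim_{R \to \infty} \frac{\int_{[R, R/b]} dr \int_{\{|\theta - \theta^\star| \le \delta(R)\}} d\tau(\theta)\, F(r, \theta)}{\int_{[R, R/b]} dr \int_{\Delta^{d-1}} d\tau(\theta)\, F(r, \theta)} = 1.
\end{equation*}
The proof should adapt the measure-concentration strategy of Section \ref{sec:localization}: the role of the source term $\eta$ in the injection setting is taken over by the constant flux hypothesis $A(R) = J_0$, which provides the required uniform control on the mass current. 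Plugging the concentration into the flux formula \eqref{S1E8} identifies $\theta^\star = \theta_0 = J_0/|J_0|$: in the limit, the integrand factorises as $(\theta^\star)_j$ times the total mass current across the level, and matching with the prescribed vector $(J_0)_j = (\theta_0)_j |J_0|$ forces $\theta^\star = \theta_0$.

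Next, I would upgrade the asymptotic statement to exact localization at every scale through an isotropic rescaling. For each $\lambda > 0$, set $G_\lambda(r, \rho; \theta, \sigma) := \lambda^{-\gamma} G(\lambda r, \lambda \rho; \theta, \sigma)$ and $F_\lambda(r, \theta) := \lambda^\alpha F(\lambda r, \theta)$, with the exponent $\alpha$ chosen so that $F_\lambda$ is again a constant flux solution for $G_\lambda$ with the same flux vector $J_0$ (a direct computation using \eqref{S1E8} gives $\alpha = (\gamma + 2d + 1)/2$). The bounds \eqref{Ineq2}, \eqref{Ineq3} are preserved with identical constants $c_1, c_2$ and exponents $\gamma, p$, so the rate $\delta(R)$ from the first step applies uniformly in $\lambda$ to every $F_\lambda$. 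Passing to a weak-$*$ subsequential limit $F_\infty = \lim_n F_{\lambda_n}$ along $\lambda_n \to \infty$, one obtains a constant flux solution that is exactly supported on the half-line $\{(r, \theta_0) : r > 0\}$ at every $r > 0$. A rigidity argument, comparing $F$ and $F_\infty$ via the bilinear weak formulation and using that both carry the same prescribed flux $J_0$, then forces $F$ itself to be supported on the same half-line, yielding $F(r, \theta) = r^{-(d-1)} H(r) \delta(\theta - \theta_0)$ for some $H \in \mathcal{M}_+(\R_*)$.

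Once the support is established, I would substitute the ansatz $F(r, \theta) = r^{-(d-1)} H(r) \delta(\theta - \theta_0)$ into the weak formulation \eqref{weak_rtheta} with $\eta = 0$ and test functions of the form $\psi(r, \theta) = \varphi(r)$. Since coagulation preserves the direction $\theta_0$, namely $(r\theta_0 + \rho\theta_0)/(r+\rho) = \theta_0$, the $\theta, \sigma$-integrations collapse to evaluation at $\theta_0$, and the resulting identity is precisely the weak form of the one-component coagulation equation for $H$ with the effective kernel $\tilde K(r, \rho) = G(r, \rho; \theta_0, \theta_0)$. Substituting the same ansatz into \eqref{S1E8} then confirms that $H$ is a constant flux solution in the one-component sense with flux magnitude $|J_0|$.

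The main obstacle is the rigidity step in the second part: while the rescaling readily yields an exactly localized constant flux solution $F_\infty$ as a limit, showing that the original $F$ must share the same support structure is the technical core. The uniform-in-$\lambda$ bounds on $G_\lambda$ together with the moment estimates in Definition \ref{DefConstFlux} are essential to prevent mass from escaping to $r = 0$ or $r = \infty$ under the rescaling; beyond that, ruling out a nonzero residual off-line mass for $F$ appears to require a direct test-function comparison in the weak formulation, exploiting the specific algebraic structure of the coagulation operator in the $\ell^1$ variables combined with the structural information on $F_\infty$ obtained from the scaling limit.
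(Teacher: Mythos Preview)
Your rescaling/rigidity step is a genuine gap, and it is not clear how to close it along the lines you suggest. The observation that $F_\lambda$ at scale $R$ corresponds to $F$ at scale $\lambda R$ simply re-expresses the asymptotic localization of $F$ at large scales; passing to a weak limit $F_\infty$ along $\lambda_n\to\infty$ yields a measure supported on the half-line, but $F_\infty$ is a constant flux solution for the \emph{limit kernel} $G_\infty$, not for $G$, and carries no direct information about the finite-scale behaviour of $F$ itself. Your proposed ``rigidity'' comparison between $F$ and $F_\infty$ has no obvious mechanism: both having flux $J_0$ does not prevent $F$ from carrying additional off-line mass at finite $r$, and the weak formulation does not couple $F$ to $F_\infty$ in any way.

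The paper's argument is structurally different and avoids this difficulty entirely. It introduces the test function $\psi(r,\theta)=r\phi_{R_1,R_2}(r)\|\theta\|^2$ with a two-sided cutoff $\phi_{R_1,R_2}=\phi_{R_2}-\phi_{R_1/2}$, which produces a dissipation identity $j_{R_1/2}=D_{R_1/2,R_2}+j_{R_2}$, where $D_{R_1/2,R_2}\ge 0$ is an integral of $\|\theta-\sigma\|^2$ against the product measure and $j_R$ is a boundary flux term. The constant flux assumption gives a uniform bound $j_R\le C|J_0|$, hence $D_{0,\infty}<\infty$. The crucial point is that the paper establishes asymptotic localization \emph{both} as $R\to\infty$ \emph{and} as $R\to 0$ (the latter using a modified family $\lambda_0(\theta;R)$ with a different weight), and uses this to show $j_R-\|\theta_0\|^2\tilde j_R\to 0$ at both ends, forcing $D_{0,\infty}=0$. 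The vanishing of this nonnegative quadratic functional in $\theta-\sigma$ then \emph{directly} implies that $F$ is supported on a single direction for all $r$. Your scheme lacks any analogue of the $R\to 0$ analysis and of the dissipation identity; without them, asymptotic localization at infinity cannot by itself be promoted to exact localization.
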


\begin{remark}
We observe that we did not require the kernel $K$ to be homogeneous  
 for the asymptotic localization of 
stationary injection solutions (Theorems \ref{LocContInj}, \ref{LocDiscInj}) nor to prove the complete localization of constant flux solutions in Theorem \ref{LocConstFlux}. Notice that, if the kernel $K$ is homogeneous, there are constant flux solutions to \eqref{S1E7} with $H\left(  r\right)=r^{-\frac{\gamma+3}{2}}$ (cf.~\eqref{C4}). However, we cannot expect all the constant flux solutions to the one-dimensional coagulation equation to be power-law solutions,  even for homogeneous kernels.  
Indeed, there are one-dimensional coagulation kernels for which there
exist constant flux solutions which are not power laws (cf.~\cite{FLNV3}).
\end{remark}

\bigskip

The proofs (cf.\ Sections \ref{sec:localization} and \ref{sec:loca_constantFlux})  rely on  growth bounds for the stationary injection and constant flux solutions given in Proposition \ref{UppEstCont} below which has been proven in \cite{FLNV1}. The bounds  are used to   derive  estimates for an appropriate family in $R$ of probability measures in $\theta$ as defined in \eqref{S7E8} in Section \ref{sec:localization}. These measures are then shown to converge weakly to  the Dirac  $\delta(\theta-\theta_0)$ as $R\to \infty$ by using a measure concentration estimate (cf.\ Lemma \ref{LemmSplit}). 
In the case of  constant flux solutions $F(r,\theta)$ we further show that the mass is concentrated not only asymptotically but   for all $r$ by using  similar  estimates.

\bigskip


\section{Technical tools}
\label{sec:8}

\bigskip

In this Section we collect some results that will be used later in the proofs of  Theorems \ref{LocContInj}, \ref{LocDiscInj}, and \ref{LocConstFlux}. Some statements are taken from \cite{FLNV} and \cite{FLNV1}.

\subsection{Reduction of the problem to $p=0$}
\label{sec:reductionp}

The kernels $K$ satisfying (\ref{Ineq1})--(\ref{Ineq3}) can be
characterized by the two parameters $\gamma, p$.
Using a suitable change of variable, we can transform the stationary  solutions  with kernels $K$ 
into stationary solutions  with new kernels $\tilde{K}$  with parameters
$\tilde{\gamma}=\gamma+2p$ and $\tilde{p}=0$.

This follows from an idea introduced in \cite{DP} (see also \cite{BLL})\footnote{We thank P.\
Laurencot for conveying the idea and related references to us; a slightly different combination of the scaling idea with techniques of \cite{FLNV} may be found in the recent paper \cite{Laurencotpreprint20}.}. The idea is based on the observation that  $f$ is a stationary injection solution (resp. constant flux solution) associated to the kernel $K$ if and only if $h :=\left\vert x\right\vert ^{-p}f$ is a stationary injection solution (resp. constant flux solution) associated with the kernel $\tilde K$ defined by
\begin{equation}
\tilde{K}\left(  x,y\right)  =K\left(  x,y\right)  \left\vert x\right\vert
^{p}\left\vert y\right\vert ^{p}. \label{S2E1}%
\end{equation}
This new kernel $\tilde K$ satisfies (\ref{Ineq2}) after
replacing $\gamma$ by $\tilde{\gamma
}=\gamma+2p$ and $p$ by $\tilde{p}=0$, ie., $\tilde K$ satisfies 
\begin{equation}
c_{1}\left(  \left\vert x\right\vert +\left\vert y\right\vert \right)
^{\gamma+2p}\leq\tilde{K}\left(  x,y\right)  \leq c_{2}\left(  \left\vert
x\right\vert +\left\vert y\right\vert \right)  ^{\gamma+2p} \,,  \qquad x,y\in \R^d_*\,.\label{S2E2}%
\end{equation}
We then have the following Lemma (cf. \cite{FLNV1}).

\begin{lemma}\label{EquivDefin}
 Let  $\eta\in\mathcal{M}_{+,b}\left(  \mathbb{R}_{*}^{d}\right)  $ be
supported inside the set $\defset{x\in \R^d_*}{1\le |x|\le L}$ for some $L>1$. 
Suppose that $K$ is continuous and it satisfies (\ref{Ineq2}), (\ref{Ineq3}).
The Radon measure $f\in\mathcal{M}_{+}\left(  \mathbb{R}_{*}^{d}\right)  $ is
a stationary injection solution to (\ref{eq:contStat}) in the sense of Definition \ref{StatInjDef} (resp. a constant flux solution to (\ref{S1E7}) in the sense of Definition \ref{DefConstFlux}) if and only if the Radon measure $h\left(x\right)  =\frac{f\left(  x\right)  }{\left\vert x\right\vert ^{p}}$ is a
stationary injection solution to (\ref{eq:contStat}) (resp. a constant flux solution to (\ref{S1E7})) with the kernel
$\tilde{K}$ defined in (\ref{S2E1}). The kernel $\tilde{K}$ satisfies
(\ref{S2E2}).
\end{lemma}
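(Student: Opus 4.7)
The plan is to verify that the transformation $h(dx) = |x|^{-p} f(dx)$ intertwines, on one side, the weak formulation (\ref{WeakForm1}) associated with $K$ and, on the other side, the same weak formulation with $\tilde{K}$ defined in (\ref{S2E1}). Since $|x|^{-p}$ is a continuous and strictly positive function on $\R^d_*$ and bounded on each compact subset of $\R^d_*$, the recipe $h(dx) = |x|^{-p} f(dx)$ defines a bona fide Radon measure on $\R^d_*$ and is reversed by $f(dx) = |x|^p h(dx)$, so the correspondence $f \leftrightarrow h$ is a bijection of $\mathcal{M}_+(\R^d_*)$. Likewise, the support conditions $\supp f \subset \{|x|\ge 1\}$ (or more generally any support constraint) are preserved. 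The starting point is then the identity
\[
K(x,y)\,f(dx)\,f(dy) = K(x,y)\,|x|^p|y|^p\,h(dx)\,h(dy) = \tilde{K}(x,y)\,h(dx)\,h(dy),
\]
which inserted into (\ref{WeakForm1}) shows that the double integral is the same whether written with $(K,f)$ or with $(\tilde{K},h)$. The source integral $\int \varphi\,\eta(dx)$ is unaffected by the transformation, and the test function class $C^1_c(\R^d_*)$ is also unchanged. Hence (\ref{WeakForm1}) holds for $(K,f)$ if and only if it holds for $(\tilde K, h)$.

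Next, I would check the integrability conditions. For stationary injection solutions, one needs $\int |x|^{\gamma+p} f(dx) < \infty$ for $f$, and the corresponding condition for $h$ (with parameters $\tilde \gamma=\gamma+2p$, $\tilde p=0$) is $\int |x|^{\tilde\gamma + \tilde p} h(dx) = \int |x|^{\gamma+2p} h(dx) < \infty$; but the latter equals $\int |x|^{\gamma+p} f(dx)$ by the very definition of $h$, so the two conditions are identical. For constant flux solutions one argues analogously with the split condition stated in Definition \ref{DefConstFlux}, since multiplying $h$ by $|x|^p$ converts the exponents $\gamma+2p$ and $1$ appearing there into $\gamma+p$ and $1-p$. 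The bound (\ref{S2E2}) on $\tilde K$ follows directly from (\ref{Ineq2})--(\ref{Ineq3}): writing $s=|x|/(|x|+|y|)$ we have $s^p(1-s)^p = |x|^p|y|^p/(|x|+|y|)^{2p}$, so multiplying the bounds in (\ref{Ineq2}) by $|x|^p|y|^p$ exactly produces the homogeneous bounds of exponent $\gamma+2p$.

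Finally, for constant flux solutions the only remaining piece is to verify that the vector-valued flux $A(R)$ defined in (\ref{S1E8}) is an invariant of the transformation. From $f(dx)=|x|^p h(dx)$ and the defining relation (\ref{G2}) one reads off $F(r,\theta) = r^p H(r,\theta)$ where $H$ is associated to $h$ via the same change of variables, and $\tilde G(r,\rho;\theta,\sigma) = r^p\rho^p\, G(r,\rho;\theta,\sigma)$ by (\ref{G1}). Substituting these three identities into the integrand of (\ref{S1E8}) written with $(H,\tilde G)$, the factors $r^p \rho^p$ from $\tilde G$ are cancelled exactly by the inverse factors coming from $F = r^p H$, so the integrand reduces to the one of $A_j(R)$. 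Hence $f$ has constant total flux $J_0$ with kernel $K$ if and only if $h$ has constant total flux $J_0$ with kernel $\tilde K$. I do not foresee a genuine obstacle in this argument; the only subtle point is the measure-theoretic handling of $h(dx) = |x|^{-p} f(dx)$ at infinity (its moments must still be finite) and near $|x|=1$ (continuity and boundedness of $|x|^{-p}$ on $\{|x|\ge 1\}$ or on $\{0<|x|<1\}$ in the flux case), which is why both directions of the equivalence truly depend on the integrability hypothesis in the relevant definition.
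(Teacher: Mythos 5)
Your proposal is correct and follows exactly the approach the paper outlines in Section 4.1 (the paper itself only sketches the idea and defers the detailed verification to \cite{FLNV1}): the identity $K\,f\otimes f=\tilde K\,h\otimes h$ makes the weak formulation, the moment conditions, and the flux functional $A(R)$ manifestly invariant under $f\mapsto h=|x|^{-p}f$, and the bound (\ref{S2E2}) follows from the algebraic identity $\Phi\bigl(|x|/(|x|+|y|)\bigr)|x|^p|y|^p=(|x|+|y|)^{2p}$. All the computations you give (in particular $F=r^pH$ and $\tilde G=r^p\rho^p G$, which cancel in (\ref{S1E8})) check out.
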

\bigskip

\subsection{A technical Lemma}

The following  Lemma allows to transform estimates of averaged integrals into estimates in the whole line. This Lemma is a particular case of  Lemma 2.10 (items 2 and 3) in \cite{FLNV}. 
\begin{lemma}
\label{lem:bound}
Suppose $a>0, R \in [a,\infty]$ and $b,r \in (0,1)$ are such that $a \leq rR $. Define the interval $I=[a,R]$ if $R<\infty$, or $I = [a,\infty)$ if $R=\infty$. Consider some  $f \in \mathcal{M_+}(\R_*)$ and $\varphi \in C(\R_*)$, with $\varphi \geq 0$. If there is a polynomial function $g(x) = c_0x^q$ with $c_0\geq 0$ and $q \in \R$ such that $g \in L^1(I)$ and
\begin{equation*}\label{eq:avg}
\frac{1}{z}\int_{[bz,z]} \varphi(x) f(dx) \leq g(z)\,, \quad \text{for } z \in I \,,
\end{equation*}
then there is a constant $C>0$ which depends only on $r, b $ and $q$ such that
\begin{equation*}\label{eq:bound}
\int_{I} \varphi(x) f(dx) \leq Cc_0 \int_{I}x^q dx.
\end{equation*}
\end{lemma}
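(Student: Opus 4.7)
The plan is to reduce the integral estimate to a comparison between a geometric sum and $\int_I x^q \,dx$. The key observation is that each instance of the hypothesis controls the integral over a window $[bz,z]$ of geometric width, so a geometric cover of $I$ should give a bound of the correct order.

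First I would fix $z_k := a\, b^{-k}$ for $k \geq 0$, so that $z_{k+1} = z_k/b$ and $[b z_{k+1}, z_{k+1}] = [z_k, z_{k+1}]$. Writing $J_k := [z_k, z_{k+1}]$, the intervals $\{J_k\}_{k\geq 0}$ partition $[a,\infty)$ up to endpoints. If $R=\infty$, they already partition $I$; if $R<\infty$, let $N$ be the largest integer with $z_{N+1}\leq R$. Then $\{J_k\}_{k=0}^{N}$ cover $[a,z_{N+1}]$, and the remaining part $(z_{N+1},R]$ is contained in $[bR,R]$, since $R < z_{N+2} = z_{N+1}/b$ gives $bR < z_{N+1}$. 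The assumption $a\leq rR$ with $r\in(0,1)$ ensures $R>a$ strictly, so that this decomposition is non-trivial; the value of $r$ affects only the final constant.

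Next, applying the hypothesis at $z = z_{k+1}\in I$ for each $k\leq N$ gives
\[
\int_{J_k} \varphi(x)\, f(dx) \;\leq\; z_{k+1}\, g(z_{k+1}) \;=\; c_0\, a^{q+1}\, b^{-(k+1)(q+1)},
\]
and applying it at $z=R\in I$ gives the tail bound $\int_{(z_{N+1},R]} \varphi\, f \leq \int_{[bR,R]}\varphi\,f \leq c_0 R^{q+1}$. Summing yields a geometric series in $b^{-(q+1)}$ plus an explicit boundary term. The comparison with $\int_I x^q\,dx$ then splits according to the sign of $q+1$: when $q+1>0$ the ratio $b^{-(q+1)}>1$, the sum is dominated by its last term of order $R^{q+1}$, which matches $\int_a^R x^q\,dx \sim R^{q+1}/(q+1)$; when $q+1<0$ the ratio is less than one, the sum is controlled by its first term of order $a^{q+1}$, matching the finite value of $\int_a^\infty x^q\,dx = |a|^{q+1}/|q+1|$; and when $q+1=0$ one obtains a sum of order $N \sim \log(R/a)/\log(1/b)$, matching $\int_a^R x^{-1}\,dx = \log(R/a)$. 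In every case the sum is at most a constant depending only on $b$, $q$ (and on $r$ through $N$ when needed) times $\int_I x^q\,dx$.

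The main technical obstacle is the careful bookkeeping across these three sign regimes for $q+1$ together with the endpoint handling at $R$, which must be done so that the hypothesis is only ever invoked at a point of $I$; once the geometric cover and the $z=R$ tail bound are set up correctly, the rest is a direct geometric-series comparison with constants that depend only on $r$, $b$, and $q$.
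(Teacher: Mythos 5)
Your argument is correct, and it supplies a proof that the paper itself omits: the paper simply cites Lemma 2.10 of \cite{FLNV} for this statement, so your geometric-cover argument (partitioning $I$ into the windows $[bz_{k+1},z_{k+1}]=[z_k,z_{k+1}]$ with $z_k=ab^{-k}$, applying the hypothesis on each, and comparing the resulting geometric series with $\int_I x^q\,dx$ in the three regimes $q+1>0$, $q+1=0$, $q+1<0$) is exactly the standard route one would take and matches what the cited lemma does. One small point to tidy up: when $R<\infty$ and $a/b>R$ (possible if $r>b$), no integer $N\ge 0$ satisfies $z_{N+1}\le R$, so the sum over $k$ is empty; in that case $I\subset[bR,R]$ and your single application of the hypothesis at $z=R$ already gives the bound, with the comparison $R^{q+1}$ (resp.\ $a^{q+1}$, $\log(R/a)$) versus $\int_I x^q\,dx$ controlled via $a\le rR$ exactly as in the generic case. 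Note also that the dependence of the constant on $r$ enters through this lower bound on $\int_I x^q\,dx$ in terms of the dominant endpoint power, not through $N$.
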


\subsection{Growth bounds}

We now recall some relevant growth bounds, obtained in \cite{FLNV1}, which are valid for any stationary injection solution of the continuous  (\ref{eq:contStat}) and discrete  (\ref{S1E5}) equations, as well as for any constant flux solution.
\begin{proposition}
\label{UppEstCont}
Suppose that $K$ is a continuous symmetric function satisfying
(\ref{Ineq2}), (\ref{Ineq3}) with $\gamma+2p<1.$ Suppose that 
$\eta\in\mathcal{M}_{+,b}\left(  \mathbb{R}_{*}^{d}\right)  $
is supported inside the set $\defset{x\in \R^d_*}{1\le |x|\le L}$ for some $L>1$ and let
$\vert J_{0}\vert $  denote  the total mass injection rate, where $J_{0} =\int
_{\mathbb{R}_{*}^{d}} x\eta\left(  dx\right)\in \R^d_*$. 

Consider a stationary injection solution $f\in\mathcal{M}
_{+}\left(  \mathbb{R}_{*}^{d}\right)  $ to (\ref{eq:contStat}) in the sense
of Definition \ref{StatInjDef}.
Then
there exist positive constants $C_{1},\ C_{2}$ and $b\in\left(  0,1\right)  $
depending only on $\gamma,\ p,\ d$ and the constants $c_{1},c_{2}$ in
(\ref{Ineq2}) such that with $\xi = \frac{L}{b}$ it holds %
\begin{align}
\frac{1}{z}\int_{\frac{z}{2}\leq\left\vert x\right\vert \leq z}f\left(
dx\right)   &  \leq\frac{C_{1}\sqrt{\vert J_{0}\vert }}{z^{\frac{3+\gamma}{2}}%
}\ \ \ \text{for all }z>0\,,\label{S3E4}\\
\frac{1}{z}\int_{bz\leq\left\vert x\right\vert \leq z}f\left(  dx\right)   &
\geq\frac{C_{2}\sqrt{\vert J_{0}\vert }}{z^{\frac{3+\gamma}{2}}}\ \ \ \text{for all }%
z> \xi \, .\label{S3E5}%
\end{align}

Alternatively, suppose that $f$ is a nontrivial constant 
flux solution as  in Definition \ref{DefConstFlux}
with flux $J_0$. Then
there exist positive constants $C_{1},\ C_{2}$ and $b\in\left(  0,1\right)  $
depending only on $\gamma,\ p,\ d$ and the constants $c_{1},c_{2}$ in
(\ref{Ineq2}) such that \eqref{S3E4} and \eqref{S3E5} hold with $\xi = 0$.
\end{proposition}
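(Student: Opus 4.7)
\begin{proofof}[Plan of proof]

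The plan is to reduce to the case $p=0$ and then build the argument around a flux identity. By Lemma \ref{EquivDefin}, the substitution $h(x)=|x|^{-p}f(x)$ converts a stationary solution for $K$ into one for the rescaled kernel $\tilde K(x,y)=|x|^p|y|^p K(x,y)$, which satisfies \eqref{S2E2} with exponent $\tilde\gamma=\gamma+2p<1$; since the polynomial weight $|x|^{-p}$ transforms the target exponent $(3+\gamma)/2$ into $(3+\tilde\gamma)/2$, it suffices to prove the statement for $h$ under the hypothesis $p=0$.

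Testing the weak form \eqref{WeakForm1} with a smooth compactly supported approximation of $\psi_R(x)=|x|\cf{|x|\le R}$ and using $|x+y|=|x|+|y|$ inside the positive cone produces, after passage to the limit and the observation that the source contributes exactly $\int|x|\eta(dx)=|J_0|$ once $R>L$, the flux identity
\begin{equation}\label{eq:fluxplan}
\tfrac12\iint_{|x|,|y|\le R,\,|x|+|y|>R}\!\tilde K(x,y)(|x|+|y|)\,h(dx)h(dy)
+\iint_{|x|>R,\,|y|\le R}\!\tilde K(x,y)|y|\,h(dx)h(dy)
=|J_0|.
\end{equation}
For a nontrivial constant flux solution the same identity follows directly from $A(R)=J_0$ via a symmetrization of \eqref{S1E8}, but it is valid for every $R>0$; this is what will eventually permit the improvement $\xi=0$ in that case.

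The upper bound \eqref{S3E4} is then immediate: both terms on the left of \eqref{eq:fluxplan} are non-negative, so the first is at most $|J_0|$. Restricting to $|x|,|y|\in[R/2,R]$, where $|x|+|y|>R$ is automatic, and applying $\tilde K\ge c_1 R^{\tilde\gamma}$ and $|x|+|y|\ge R$ yields
$\tfrac{c_1}{2}R^{\tilde\gamma+1}\,h\bigl(\{R/2\le|x|\le R\}\bigr)^2\le|J_0|$. Dividing by $R$ delivers \eqref{S3E4} with $\gamma$ replaced by $\tilde\gamma$, and translating back to $f$ recovers the claimed estimate.

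The lower bound \eqref{S3E5} is the main obstacle. The strategy is to fix $b\in(0,1)$ small enough, depending only on $\gamma,p,d,c_1,c_2$, so that the off-diagonal contribution to \eqref{eq:fluxplan}, i.e., the portion supported where $\min(|x|,|y|)<bR$, is at most $\tfrac12|J_0|$; the diagonal contribution with $|x|,|y|\in[bR,R]$ must then be at least $\tfrac12|J_0|$, and bounding it above by $CR^{\tilde\gamma+1}\,h(\{bR\le|x|\le R\})^2$ via $\tilde K\le c_2(2R)^{\tilde\gamma}$ and $|x|+|y|\le 2R$ yields \eqref{S3E5}. The real technical work is to control the off-diagonal piece, which couples the small-scale mass $h([1,bR])$ against the thin-shell mass $h(\{|y|\in[R-|x|,R]\})$ of width $|x|$ near scale $R$. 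Using \eqref{S3E4} together with Lemma \ref{lem:bound} to sum dyadic contributions at both ends, and exploiting that the $R^{\tilde\gamma+1}$ kernel prefactor is balanced against two factors of the $R^{-(\tilde\gamma+1)/2}$ decay of $h$, one should obtain an estimate of the form $\mathrm{offdiag}\le C|J_0|\,b^\kappa$ for some $\kappa=\kappa(\tilde\gamma)>0$, after which $b$ is fixed sufficiently small. For the constant flux case the same estimates apply with \eqref{eq:fluxplan} available for all $R>0$, giving $\xi=0$.

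\end{proofof}
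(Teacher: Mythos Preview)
The paper does not prove this proposition; it is quoted from \cite{FLNV1}. Your overall architecture---reduce to $p=0$ via Lemma~\ref{EquivDefin}, derive the flux identity \eqref{eq:fluxplan}, read off the upper bound by restricting to $|x|,|y|\in[R/2,R]$, and attack the lower bound by a diagonal/off-diagonal split---is exactly the route taken in \cite{FLNV,FLNV1}, and your upper-bound step is correct.

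The gap is in the lower bound. First, a structural point: your ``diagonal'' $\{|x|,|y|\in[bR,R]\}$ does not cover the complement of $\{\min(|x|,|y|)<bR\}$ in the flux domain, because the second term in \eqref{eq:fluxplan} lives on $\{|x|>R\}$; its portion with $|y|\in[bR,R]$ is neither in your off-diagonal nor bounded by $h([bR,R])^{2}$. Second, and more seriously, the thin-shell part of the \emph{first} term cannot be controlled as you suggest. With $\min(|x|,|y|)<bR$ in Term~A, say $|x|<bR$, one is forced into $|y|\in((1-b)R,R]$, and the integrand is $\sim R^{\tilde\gamma+1}$; the only available bound is
\[
\text{(thin shell)}\ \le\ C\,R^{\tilde\gamma+1}\,h\bigl([1,bR)\bigr)\,h\bigl([R/2,R]\bigr)\ \le\ C\sqrt{|J_0|}\,R^{(\tilde\gamma+1)/2}\,h\bigl([1,bR)\bigr).
\]
For $\tilde\gamma>-1$ (which includes the Brownian case $\tilde\gamma=2/3$) the dyadic sum for $h([1,bR))$ obtained from \eqref{S3E4} converges to a constant \emph{independent of $b$}, so this bound is $C|J_0|R^{(\tilde\gamma+1)/2}$, not $C|J_0|b^{\kappa}$. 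Your heuristic that ``the $R^{\tilde\gamma+1}$ prefactor is balanced by two factors of $R^{-(\tilde\gamma+1)/2}$'' is correct for the pieces carrying a first moment on the small side against a $\tilde\gamma$-moment on the large side (e.g.\ the off-diagonal of Term~B, or the region $|x|<bR$ with $|y|$ unbounded in the asymmetric flux form), and those pieces do give $C|J_0|b^{(1-\tilde\gamma)/2}$. But the thin-shell piece has no moment gain on the small side, and the upper bound alone gives no smallness there.

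In \cite{FLNV} this obstruction is handled by a separate technical lemma (Lemma~6.1 there), which the present paper also invokes for the analogous estimate in the proof of Theorem~\ref{LocContInj} (see the treatment of the region $\Omega_1$ after \eqref{I1}). That lemma supplies the missing ingredient; it does not follow from \eqref{S3E4} and Lemma~\ref{lem:bound} alone.
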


\begin{remark}
Notice that $\vert J_{0}\vert $ is the total injection rate, i.e., it 
includes all of the possible monomer types.
\end{remark}

\begin{remark}
We observe that for $\gamma> -1$, Proposition \ref{UppEstCont}  and Lemma \ref{lem:bound} imply that the
number of clusters associated to the stationary injection solutions
$\int_{{\mathbb{R}}_{*}} f(dx)$ is finite and the following integral estimates hold:
\[
\frac{ C_{1}\sqrt{\vert J_{0}\vert }}{z^{(\gamma+ 1)/2}} \leq\int_{\left\{\vert x\vert \geq z\right\}} f(dx) \leq
\frac{ C_{2}\sqrt{\vert J_{0}\vert }}{z^{(\gamma+ 1)/2}} \quad\text{for } z \geq1
\]
where $\vert J_{0}\vert = \int_{\R^d_{*}} \vert x \vert \eta(dx)$ and $0< C_{1} \leq C_{2} $.
\end{remark}

\begin{remark}
It is possible to obtain similar lower estimates for stationary injection
solutions $\left\{  n_{\alpha}\right\}  _{\alpha\in\mathbb{N}_{*}^{d}  }$ to the discrete problem (\ref{S1E5}). More precisely, we have (cf.~\cite{FLNV1}) 
\begin{align*}
\frac{1}{z}\sum_{\frac{z}{2}\leq\left\vert \alpha\right\vert \leq z}%
n_{\alpha}    \leq\frac{C_{1}\sqrt{\vert J_{0}\vert }}{z^{\frac{3+\gamma}{2}}%
}\ \ \ \text{for all }z\geq 1\,, \qquad 
\frac{1}{z}\sum_{bz\leq\left\vert \alpha\right\vert \leq z}n_{\alpha}  
\geq\frac{C_{2}\sqrt{\vert J_{0}\vert }}{z^{\frac{3+\gamma}{2}}}\ \ \ \text{for all }%
z\geq\frac{L_{s}}{b}\,. 
\end{align*}
where the constants $c_{1},c_{2},C_{1},\ C_{2}$ and $b\in\left(  0,1\right)$ are as in Proposition \ref{UppEstCont} and the total injection rate of monomers is given by $\vert J_{0}\vert 
=\sum_{\alpha}\left\vert \alpha\right\vert s_{\alpha}.$
\end{remark}
\bigskip

\subsection{A measure concentration Lemma}
 
The following Lemma states that given any probability measure $\lambda$ on the unit simplex
either a quadratic functional acting on the space of measures $\lambda$ is large, or there exists a set with small diameter containing most of the mass of the measure  $\lambda$. 

\begin{lemma}
\label{LemmSplit} There is a constant $C_d>0$ which depends only on the dimension $d\ge 1$ and for which the following alternative holds.

Suppose a Borel probability measure $\lambda\in
\mathcal{M}_{+,b}\left(  \Delta^{d-1}\right)  $
and parameters $\vep,\delta\in (0,1)$ are given.
Then at least one of the following alternatives is true:
\begin{itemize}
\item[(i)] There exists a measurable set $A\subset\Delta^{d-1}$ with
$\diam\left(  A\right)  \leq\vep$ such that $\int
_{A}\lambda(d\theta)> 1-\delta$.
\item[(ii)] $\int_{\Delta^{d-1}}\lambda\left(  d\theta\right)  \int_{\Delta^{d-1}%
}\lambda\left(  d\sigma\right)  \left\Vert \theta-\sigma\right\Vert ^{2}%
\geq C_d \delta \varepsilon^{d+1}$ where $\left\Vert \cdot\right\Vert $ denotes the Euclidean
distance in $\R^d$ given by $\left\Vert \theta\right\Vert ^{2}=\sum_{j=1}^{d}\left(
\theta_{j}\right)  ^{2}.$
\end{itemize}
\end{lemma}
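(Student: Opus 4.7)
The plan is to deduce alternative (ii) from the failure of (i) by recognising the double integral as twice a variance and applying a one-line Markov-type bound. I expect no real obstacle; in fact this approach yields a stronger exponent ($\varepsilon^{2}$) than the stated $\varepsilon^{d+1}$, so the result holds with a dimension-independent constant.

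First, I would introduce the barycenter
\[
\mu := \int_{\Delta^{d-1}} \theta\, \lambda(d\theta),
\]
which belongs to $\Delta^{d-1}$ because $\lambda$ is a probability measure supported on the convex set $\Delta^{d-1}$. Expanding $\left\Vert \theta - \sigma \right\Vert^{2} = \left\Vert \theta\right\Vert^{2} - 2\theta\cdot\sigma + \left\Vert \sigma\right\Vert^{2}$ and using Fubini gives the standard identity
\[
\int_{\Delta^{d-1}} \int_{\Delta^{d-1}} \left\Vert \theta-\sigma \right\Vert^{2} \lambda(d\theta)\,\lambda(d\sigma) \;=\; 2 \int_{\Delta^{d-1}} \left\Vert \theta - \mu \right\Vert^{2} \lambda(d\theta),
\]
reducing (ii) to a lower bound on the variance of $\lambda$ around its mean.

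Next, I would consider the candidate set $A := \{\theta \in \Delta^{d-1} : \left\Vert \theta - \mu\right\Vert \leq \varepsilon/2\}$, which by the triangle inequality satisfies $\diam(A) \leq \varepsilon$. If alternative (i) fails, then in particular this $A$ does not witness it, so $\lambda(A) \leq 1 - \delta$ and therefore $\lambda(A^{c}) \geq \delta$. The trivial pointwise bound $\left\Vert \theta-\mu\right\Vert^{2} \geq (\varepsilon/2)^{2}$ on $A^{c}$ then yields
\[
\int_{\Delta^{d-1}} \left\Vert \theta-\mu \right\Vert^{2} \lambda(d\theta) \;\geq\; \frac{\varepsilon^{2}}{4}\lambda(A^{c}) \;\geq\; \frac{\delta\varepsilon^{2}}{4}.
\]

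Combining with the variance identity, the double integral in (ii) is at least $\delta\varepsilon^{2}/2$. Since $\varepsilon \in (0,1)$ and $d \geq 1$ imply $\varepsilon^{d+1} \leq \varepsilon^{2}$, the conclusion (ii) follows with the absolute constant $C_{d} = 1/2$. The stated exponent $d+1$ is thus not optimal for this formulation; any more intricate exponent would arise only from a combinatorial partitioning of $\Delta^{d-1}$ into $O(\varepsilon^{-(d-1)})$ pieces of diameter $\leq \varepsilon$, which the variance argument bypasses entirely by exploiting the Hilbert-space structure of the Euclidean norm.
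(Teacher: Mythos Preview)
Your proof is correct and takes a genuinely different, more elementary route than the paper. The paper does not exploit the variance identity; instead it bounds the double integral from below by $\frac{\varepsilon^{2}}{16}\iint \cf{\|\theta-\sigma\|\geq \varepsilon/4}\,\lambda(d\theta)\lambda(d\sigma)$, then for each reference point $x\in\Delta^{d-1}$ uses a triangle-inequality splitting to get a lower bound proportional to $\delta\cdot\lambda\bigl(B(x,\varepsilon/4)\bigr)$. Averaging this over $x$ with respect to the surface measure $d\tau$ and invoking a uniform geometric lower bound $\tau\bigl(B(\sigma,\varepsilon/4)\cap\Delta^{d-1}\bigr)\gtrsim \varepsilon^{d-1}$ (proved by a case analysis near the corners of the simplex) is what produces the exponent $d+1$. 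Your approach centres at the barycentre $\mu$ and applies a single Markov-type bound, bypassing the volume estimate entirely and yielding the dimension-independent constant $C_d=1/2$ together with the sharper exponent $\varepsilon^{2}$. The paper's argument would be the natural one if only a general metric were available (so that no variance identity holds), but for the Euclidean norm your proof is strictly simpler and stronger.
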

\begin{proof}
If $d=1$, we have $\Delta^{d-1}=\{1\}$ so $\diam(\Delta^{d-1})=0$. Thus alternative (i) holds in this case always, with the choice $A=\Delta^{d-1}$.  We may set $C_1=1$, although it will never be used.

Let us then suppose $d\ge 2$ and that the case (i) is not true.  It suffices to prove that we can define $C_d>0$, depending only on $d$,
such that case (ii) holds.

Given $x\in \Delta^{d-1}$, let us consider its following $\R^d$-metric neighbourhood:
define
\[
 A(x;\vep) := \defset{\theta\in \Delta^{d-1}}{\norm{\theta-x}<\frac{1}{2}\vep}\,,
\]
and denote its $\lambda$-measure by
\[
 m(x;\vep) := \lambda(A(x;\vep)) = \int_{\Delta^{d-1}} \lambda(d\theta) \cf{\norm{\theta-x}<\frac{1}{2}\vep}\,.
\]
For clarity, we drop $\vep$ from the notation  in the following, we recall that it is fixed.
Then $\diam(A(x))\leq\vep$,  so we can conclude from the assumption that
$m(x)\le 1-\delta$.

Let us then consider the expectation in item (ii), and estimate it from below as follows, using generic characteristic functions:
\begin{align*}
 &\int_{\Delta^{d-1}}\lambda\left(  d\theta\right)  \int_{\Delta^{d-1}%
}\lambda\left(  d\sigma\right)  \left\Vert \theta-\sigma\right\Vert ^{2}
\ge \int_{\Delta^{d-1}}\lambda\left(  d\theta\right)  \int_{\Delta^{d-1}%
}\lambda\left(  d\sigma\right)  \left\Vert \theta-\sigma\right\Vert ^{2} 
\cf{\norm{\theta-\sigma}\ge \frac{\vep}{4}} 
\\ & \quad 
\ge \frac{\vep^2}{4^2}\int_{\Delta^{d-1}}\lambda\left(  d\theta\right)  \int_{\Delta^{d-1}%
}\lambda\left(  d\sigma\right)  
\cf{\norm{\theta-\sigma}\ge \frac{\vep}{4}} \,.
\end{align*}

Now for any $x\in \R^d$, by the triangle inequality,
assuming $\norm{\theta-x}\ge \frac{\vep}{2}$ and $\norm{x-\sigma}<  \frac{\vep}{4}$
implies $\norm{\theta-\sigma}\ge  \frac{\vep}{4}$.  Therefore,
\[
 \cf{\norm{\theta-\sigma}\ge \frac{\vep}{4}} \ge 
 \cf{\norm{\theta-x}\ge \frac{\vep}{2}} \cf{\norm{x-\sigma}<  \frac{\vep}{4}}\,.
\]
Thus we obtain the following lower bound, valid for any $x\in \Delta^{d-1}$,
\begin{align*}
 &\int_{\Delta^{d-1}}\lambda\left(  d\theta\right)  \int_{\Delta^{d-1}%
}\lambda\left(  d\sigma\right)  \left\Vert \theta-\sigma\right\Vert ^{2}
\ge \frac{\vep^2}{4^2}\int_{\Delta^{d-1}}\lambda\left(  d\theta\right) 
\cf{\norm{\theta-x}\ge \frac{\vep}{2}} 
\int_{\Delta^{d-1}}\lambda\left(  d\sigma\right)  \cf{\norm{x-\sigma}<  \frac{\vep}{4}} \,.
\end{align*}
Here $\cf{\norm{\theta-x}\ge \frac{\vep}{2}} =1-\cf{\norm{\theta-x}<\frac{\vep}{2}}$, and thus $\int_{\Delta^{d-1}}\lambda\left(  d\theta\right) 
\cf{\norm{\theta-x}\ge \frac{\vep}{2}}=1-m(x)\ge  \delta$.
We conclude that
\begin{align*}
 &\int_{\Delta^{d-1}}\lambda\left(  d\theta\right)  \int_{\Delta^{d-1}%
}\lambda\left(  d\sigma\right)  \left\Vert \theta-\sigma\right\Vert ^{2}
\ge \frac{\vep^2}{4^2} \delta 
\int_{\Delta^{d-1}}\lambda\left(  d\sigma\right)  \cf{\norm{x-\sigma}<  \frac{\vep}{4}} \,,
\quad x\in \Delta^{d-1}\,.
\end{align*} 

We now integrate the previous inequality over the measure $d\tau(x)$.  Denoting $c_d := \int_{\Delta^{d-1}}d\tau(x)$ and using Fubini's Theorem, we thus obtain a lower bound
\begin{align*}
 &\int_{\Delta^{d-1}}\lambda\left(  d\theta\right)  \int_{\Delta^{d-1}%
}\lambda\left(  d\sigma\right)  \left\Vert \theta-\sigma\right\Vert ^{2}
\ge \frac{\vep^2}{c_d 4^2} \delta 
\int_{\Delta^{d-1}}\lambda\left(  d\sigma\right) \int_{\Delta^{d-1}}d\tau(x) \cf{\norm{x-\sigma}<  \frac{\vep}{4}} \,.
\end{align*} 
We claim that there is a uniform lower bound $C_d>0$ which depends only on $d$ and such that, for any $\sigma\in \Delta^{d-1}$, 
\begin{align}\label{eq:lowergeomest}
 & \frac{1}{c_d 4^2} 
\int_{\Delta^{d-1}}d\tau(x) \cf{\norm{x-\sigma}<  \frac{\vep}{4}} \ge C_d \vep^{d-1} \,.
\end{align} 
Since $\lambda$ is a probability measure, we may then conclude that
\begin{align*}
 &\int_{\Delta^{d-1}}\lambda\left(  d\theta\right)  \int_{\Delta^{d-1}%
}\lambda\left(  d\sigma\right)  \left\Vert \theta-\sigma\right\Vert ^{2}
\ge C_d \delta \vep^{d+1} \,,
\end{align*} 
and thus item (ii) holds for this constant $C_d$ which satisfies the requirements of the Lemma.

It only remains to prove (\ref{eq:lowergeomest}). The proof relies on the following geometrical argument.
We consider the volume of the intersection of the simplex with $d$-balls centered at $\sigma$.  The worst case scenario occurs at the extreme corner points of the simplex.  However, since $d$ is fixed and finite, even the cones associated with the corner points have a finite volume fraction
of the ball, hence they have the same scaling as the radius goes to zero.  This implies that $\int_{\Delta^{d-1}}d\tau(x) \cf{\norm{x-\sigma}<  \frac{\vep}{4}} \ge c'_d \vep^{d-1}$
using some $c'_d>0$ and for all $\vep\in (0,1)$, and this estimate may be used to complete the proof of (\ref{eq:lowergeomest}). 

For the detailed proof, let us parametrize the simplex using the coordinate system introduced in Section~\ref{sec:change_var}: for $\theta\in \Delta^{d-1}$ denote $\hat{\theta}=\left(\theta_1,\ldots,\theta_{d-1}\right)\in \R_+^{d-1}$, and note that then $|\hat\theta|\le 1$ and 
$\theta_d=1-|\hat{\theta}|$.  Using the Schwarz inequality, we find
$\norm{\theta-\sigma}^2\le d\norm{\hat\theta-\hat\sigma}^2$, and thus
$\cf{\norm{\theta-\sigma}<  \frac{\vep}{4}} \ge \cf{\norm{\hat\theta-\hat\sigma}<  \frac{\vep}{4\sqrt{d}}}$.  Therefore, denoting $\vep':= \frac{\vep}{4\sqrt{d}}>0$,
\begin{align*}
& \int_{\Delta^{d-1}}d \tau\left(  \theta\right)\cf{\norm{\theta-\sigma}<  \frac{\vep}{4}} \ge \sqrt{d}\int_{\R_+^{d-1}} d \hat\theta \cf{|\hat\theta|\le 1}\cf{\norm{\hat\theta-\hat\sigma}<  \vep'}
\\ & \quad 
=\sqrt{d}(\vep')^{d-1}\int_{\R^{d-1}} d y \cf{\norm{y}< 1}
\prod_{j=1}^{d-1} \cf{y_j\ge -\frac{\hat\sigma_j}{\vep'}}
\cf{\sum_{j=1}^{d-1}y_j\le \frac{1-|\hat\sigma|}{\vep'}}\,,
\end{align*}
where we have
made a change of variables to 
$y=\frac{1}{\vep'}\left(\hat\theta-\hat\sigma\right)$.
Now if $|\hat\sigma|\le 1-\vep'\sqrt{d}=1-\frac{\vep}{4}$, the remaining integrand is one on the 
set $\defset{y}{\norm{y}<1\,, y_j\ge 0\text{ for all }j }$, and hence its value may be bounded from below by an only $d$-dependent strictly positive constant.
Otherwise, there has to be some $j'$ such that $\hat{\sigma}_{j'}\ge \frac{1}{2d}$ and 
then $\frac{\hat{\sigma}_{j'}}{\vep'}\ge \frac{2}{\sqrt{d}}$.
Now for $y$ such that $\norm{y}<1$,
$-\frac{2}{\sqrt{d}}\le y_{j'}\le -\frac{1}{\sqrt{d}}$ and $0\le y_j \le \frac{1}{d\sqrt{d}}$ for $j\ne j'$, all the conditions in the characteristic functions are satisfied, and thus the integrand is then one.  Again the integral over this non-empty set results in a lower bound by an only $d$-dependent strictly positive constant.  This proves (\ref{eq:lowergeomest}) and completes the proof of the Lemma.
\end{proof}

\bigskip

\section{Localization properties of the stationary injection solutions 
for the continuous and discrete coagulation equation}\label{sec:localization}
 
We prove here Theorems \ref{LocContInj} and \ref{LocDiscInj}. 
Given that the solutions to the discrete problem (\ref{S1E5}) can be thought of as solutions to
the continuous model (\ref{eq:contStat}) it will be sufficient to prove the
localization results in the case of the continuous equation (\ref{eq:contStat}).

We can argue as in Section \ref{sec:reductionp} in order to reduce the localization in stationary injection solutions to the case in which the kernels $K$ satisfy (\ref{Ineq2}), (\ref{Ineq3}) with
$p = 0, \gamma  < 1$. We formulate this result precisely.

\begin{lemma}
\label{LocEquiv}Suppose that $K$ and $\eta$ are as in the statement of Theorem
\ref{LocContInj}. Let $f\in\mathcal{M}_{+}\left(  \mathbb{R}_{*}^{d}\right)  $
be a stationary injection solution of (\ref{eq:contStat}) in the sense of
Definition \ref{StatInjDef} and let $F$ be defined by means of (\ref{G2}).
Define $h\left(  x\right)  =\frac{f\left(  x\right)  }{\left\vert x\right\vert
^{p}}.$ Then $h$ is a stationary injection solution  with the kernel $\tilde{K}$ in
(\ref{S2E1}) which satisfies (\ref{S2E2}). Let $H\left(  r,\theta\right)$ be 
obtained from $h\left(  x\right)$ as $F$ is obtained from $f$.  Then, there exists $b \in (0,1)$ and a function  $\delta:\mathbb{R}_{*}\rightarrow\mathbb{R}_{+}$ satisfying
$\lim_{R\rightarrow\infty}\delta\left(  R\right)  =0$ and such that
(\ref{S2E4}) holds for some $\theta_{0}\in\Delta^{d-1}$ if and only if%
\begin{equation}
\lim_{R\rightarrow\infty}\left(  \frac{\int_{\left[  R,R/b\right]  }%
dr\int_{\Delta^{d-1}\cap\left\{  \left\vert \theta-\theta_{0}\right\vert
\leq\delta\left(  R\right)  \right\}  }d\tau\left(  \theta\right)  H\left(
r,\theta\right)  }{\int_{\left[  R,R/b\right]  }\int_{\Delta^{d-1}}d\tau\left(
\theta\right)  H\left(  r,\theta\right)  }\right)  =1 \ .\label{S3E3}%
\end{equation}

\end{lemma}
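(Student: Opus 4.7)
The plan is to view this as essentially a bookkeeping statement: the substitution $h(x)=|x|^{-p}f(x)$ amounts, in the $(r,\theta)$ variables, to multiplying the radial density by the scalar factor $r^{-p}$, and this factor is roughly constant on each dyadic interval $[R,R/b]$, so localization properties (\ref{S2E4}) and (\ref{S3E3}) are interchangeable.

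First I would invoke Lemma \ref{EquivDefin} to obtain that $h$ is a stationary injection solution with respect to $\tilde{K}$, which takes care of the first conclusion. Then I would identify the measure $H$ with $r^{-p}F(r,\theta)\,dr\,d\tau(\theta)$: from the defining relation (\ref{G2}), for any $\psi\in C_c(\R_*\times\Delta^{d-1})$,
\begin{equation*}
\int \psi(r,\theta)\,\frac{r^{d-1}}{\sqrt{d}}\,H(r,\theta)\,dr\,d\tau(\theta)
= \int \psi(|x|,x/|x|)\,|x|^{-p}f(x)\,dx
= \int \psi(r,\theta)\,r^{-p}\,\frac{r^{d-1}}{\sqrt{d}}\,F(r,\theta)\,dr\,d\tau(\theta),
\end{equation*}
so $H = r^{-p}F$ as Radon measures on $\R_*\times\Delta^{d-1}$.

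Next I would fix $b\in(0,1)$ as in the statement of Theorem \ref{LocContInj}, and note that on each interval $r\in[R,R/b]$ there are constants $c_1,c_2>0$ depending only on $b$ and $p$ such that $c_1 R^{-p}\le r^{-p}\le c_2 R^{-p}$. Introduce the shorthand
\begin{equation*}
N(R)=\int_{[R,R/b]}\!\!dr\int_{\Delta^{d-1}}\!\!d\tau(\theta)\,F(r,\theta),\qquad
N_\theta(R)=\int_{[R,R/b]}\!\!dr\int_{\Delta^{d-1}\cap\{|\theta-\theta_0|\le\delta(R)\}}\!\!d\tau(\theta)\,F(r,\theta),
\end{equation*}
and analogous quantities $\tilde N(R),\tilde N_\theta(R)$ with $F$ replaced by $H=r^{-p}F$. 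The assertions (\ref{S2E4}) and (\ref{S3E3}) read respectively $(N-N_\theta)/N\to 0$ and $(\tilde N-\tilde N_\theta)/\tilde N\to 0$.

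Finally, I would apply the two-sided bound on $r^{-p}$ factorwise: since $F\ge 0$ and $N-N_\theta\ge 0$, integrating gives $c_1 R^{-p}(N-N_\theta)\le \tilde N-\tilde N_\theta\le c_2 R^{-p}(N-N_\theta)$ and similarly $c_1 R^{-p}N\le \tilde N\le c_2 R^{-p}N$, whence
\begin{equation*}
\frac{c_1}{c_2}\,\frac{N(R)-N_\theta(R)}{N(R)}\;\le\;\frac{\tilde N(R)-\tilde N_\theta(R)}{\tilde N(R)}\;\le\;\frac{c_2}{c_1}\,\frac{N(R)-N_\theta(R)}{N(R)}.
\end{equation*}
As $c_2/c_1$ depends only on $b$ and $p$, the left-hand ratio tends to zero iff the middle one does, proving the desired equivalence. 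There is no real obstacle here; the only small point to watch is that the strict positivity $N(R)>0$ (hence $\tilde N(R)>0$) for $R$ sufficiently large is needed so that the ratios are well defined, but this follows from the lower bound (\ref{S3E5}) of Proposition \ref{UppEstCont} combined with Lemma \ref{lem:bound}.
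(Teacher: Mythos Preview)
Your proof is correct and follows essentially the same approach as the paper: both arguments identify $H=r^{-p}F$, pass to the complementary ratio, and use that $r^{-p}$ is two-sided comparable to $R^{-p}$ on $[R,R/b]$ to sandwich one ratio by constant multiples of the other. One small remark: the positivity of $N(R)$ for large $R$ follows directly from the lower bound (\ref{S3E5}) (after accounting for the Jacobian factor $r^{d-1}$, which is itself comparable to $R^{d-1}$ on $[R,R/b]$); Lemma~\ref{lem:bound} is an upper-bound tool and is not actually needed here.
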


\begin{proof}
First, note that (\ref{S2E4}) is equivalent to
\begin{equation}
\lim_{R\rightarrow\infty}\left(  \frac{\int_{\left[  R,R/b\right]  }%
dr\int_{\Delta^{d-1}\cap\left\{  \left\vert \theta-\theta_{0}\right\vert
>\delta\left(  R\right)  \right\}  }d\tau\left(  \theta\right)  F\left(
r,\theta\right)  }{\int_{\left[  R,R/b\right]  }\int_{\Delta^{d-1}}d\tau\left(
\theta\right)  F\left(  r,\theta\right)  }\right)  =0 \,. \label{S3E2}%
\end{equation}
Using that $H\left(  r,\theta\right)dr d\tau\left(  \theta\right)   =r^{-p}
F\left(  r,\theta\right)d r d\tau\left(  \theta\right)$ we obtain%
\begin{align*}
& \int_{\left[  R,R/b\right]  }dr\int_{\Delta^{d-1}\cap\left\{  \left\vert
\theta-\theta_{0}\right\vert >\delta\left(  R\right)  \right\}  }d\tau\left(
\theta\right)  H\left(  r,\theta\right)
\\ & \quad
\leq\frac{b^{\min\left\{
0,p\right\}} }{R^{p}}\int_{\left[  R,R/b\right]  }dr\int_{\Delta^{d-1}%
\cap\left\{  \left\vert \theta-\theta_{0}\right\vert >\delta\left(  R\right)
\right\}  }d\tau\left(  \theta\right)  F\left(  r,\theta\right)\,,
\end{align*}
and
\begin{align*}
& \int_{\left[  R,R/b\right]  }dr\int_{\Delta^{d-1}\cap\left\{  \left\vert
\theta-\theta_{0}\right\vert >\delta\left(  R\right)  \right\}  }d\tau\left(
\theta\right)  H\left(  r,\theta\right)
\\ & \quad
 \geq\frac{ b^{\max\left\{
0,p\right\}} }{R^{p}}
 \int_{\left[  R,R/b\right]  }dr\int_{\Delta^{d-1}%
\cap\left\{  \left\vert \theta-\theta_{0}\right\vert >\delta\left(  R\right)
\right\}  }d\tau\left(  \theta\right)  F\left(  r,\theta\right)\,,
\end{align*}

If we assume that (\ref{S3E2}) holds, and then use these two inequalities, we find that
\[
\lim_{R\rightarrow\infty}\left(  \frac{\int_{\left[  R,R/b\right]  }%
dr\int_{\Delta^{d-1}\cap\left\{  \left\vert \theta-\theta_{0}\right\vert
>\delta\left(  R\right)  \right\}  }d\tau\left(  \theta\right)  H\left(
r,\theta\right)  }{\int_{\left[  R,R/b\right]  }\int_{\Delta^{d-1}}d\tau\left(
\theta\right)  H\left(  r,\theta\right)  }\right)  =0\,.
\]
from which (\ref{S3E3}) follows.
The opposite direction is proven via similar estimates, and the 
rest of the Lemma is a matter of comparison of definitions.
\end{proof}

We now prove Theorem \ref{LocContInj}. \medskip

\begin{proofof}
[Proof of Theorem \ref{LocContInj}]Due to Lemma \ref{LocEquiv} it is
enough to prove the result for $p=0,\ \gamma<1.$ 
For convenience, we rewrite here the  weak formulation \eqref{weak_rtheta}, which is valid for any function
$\psi\in C_{c}^1\left( \R_*  \times \Delta^{d-1}\right)  $%
\begin{align}
&  \frac{1}{2d}\int_{\mathbb{R}_{*}}r^{d-1}dr\int_{\mathbb{R}_{*}}\rho
^{d-1}d\rho\int_{\Delta^{d-1}}d\tau\left(  \theta\right)  \int_{\Delta^{d-1}%
}d\tau\left(  \sigma\right)  G\left(  r,\rho;\theta,\sigma\right)  F\left(
r,\theta\right)  F\left(  \rho,\sigma\right) \nonumber\\
&  \times\left[  \psi\left(  r+\rho,\frac{r}{r+\rho}\theta+\frac{\rho}{r+\rho
}\sigma\right)  -\psi\left(  r,\theta\right)  -\psi\left(  \rho,\sigma\right)
\right]  +\int_{{\mathbb{R}_{*}^{d}}}\varphi\left(  x\right)  \eta\left(
dx\right)  =0\,\label{S7E2}%
\end{align}
 where $\varphi(x) = \psi(\theta,r)$ and the source  $\eta$ is supported in the set $\defset{x\in \R^d_*}{1\le |x|\le L}$ for some $L>1$.
In order to obtain estimates for the measure $F$ we consider in (\ref{S7E2}) 
test functions $\psi\left(  r,\theta;R\right)  =r\phi_{R}\left(  r\right)
\left\Vert \theta\right\Vert ^{2}$ with $\left\Vert \theta\right\Vert
^{2}=\sum_{j=1}^{d}\left(  \theta_{j}\right)  ^{2}$ and $R\ge L$.    Here, $\phi_R\in C^\infty_c(\R_*)$ is chosen as a bump function:  it is monotone increasing on $(0,1]$ and monotone decreasing on $[1,\infty)$. We also assume that  
$\phi_{R}\left( r\right)  =1$ for $1\leq r\leq R,$ and $\phi_{R}\left(  r\right)  =0$ for
$r\geq 2R$.  

Then,
\begin{align*}
&  \psi\left(  r+\rho,\frac{r}{r+\rho}\theta+\frac{\rho}{r+\rho}\sigma\right)
-\psi\left(  r,\theta\right)  -\psi\left(  \rho,\sigma\right)  \\
&  =\left(  r+\rho\right)  \phi_{R}\left(  r+\rho\right)  \left\Vert \frac
{r}{r+\rho}\theta+\frac{\rho}{r+\rho}\sigma\right\Vert ^{2}-r\phi_{R}\left(
r\right)  \left\Vert \theta\right\Vert ^{2}-\rho\phi_{R}\left(  \rho\right)
\left\Vert \theta\right\Vert ^{2}\\
&  =\phi_{R}\left(  r+\rho\right)  \left(  \frac{r^{2}}{\left(  r+\rho\right)
}\left\Vert \theta\right\Vert ^{2}+\frac{2r\rho}{\left(  r+\rho\right)
}\left(  \theta\cdot\sigma\right)  +\frac{\rho^{2}}{\left(  r+\rho\right)
}\left\Vert \sigma\right\Vert ^{2}\right)  -r\phi_{R}\left(  r\right)
\left\Vert \theta\right\Vert ^{2}-\rho\phi_{R}\left(  \rho\right)  \left\Vert
\sigma\right\Vert ^{2}\\
&  =\frac{1}{\left(  r+\rho\right)  }\left[  \phi_{R}\left(  r+\rho\right)
r^{2}-\phi_{R}\left(  r\right)  r\left(  r+\rho\right)  \right]  \left\Vert
\theta\right\Vert ^{2}
+\frac{2r\rho}{\left(  r+\rho\right)  }\phi_{R}\left(  r+\rho\right)
\left(  \theta\cdot\sigma\right)\\ &  \quad 
+\frac{1}{\left(  r+\rho\right)  }\left[  \phi
_{R}\left(  r+\rho\right)  \rho^{2}-\phi_{R}\left(  \rho\right)  \rho\left(
r+\rho\right)  \right]  \left\Vert \sigma\right\Vert ^{2}
\,.
\end{align*}

We next assume $r,\rho\ge 1$, and note that then $\phi_{R}\left(
r\right)  \geq\phi_{R}\left(  r+\rho\right)  $ and $\phi_{R}\left(
\rho\right)  \geq\phi_{R}\left(  r+\rho\right)$.  This yields an upper bound
\begin{align*}
&  \psi\left(  r+\rho,\frac{r}{r+\rho}\theta+\frac{\rho}{r+\rho}\theta\right)
-\psi\left(  r,\theta\right)  -\psi\left(  \rho,\sigma\right)  \\
&  \leq\frac{1}{\left(  r+\rho\right)  }\left[  \phi_{R}\left(  r+\rho\right)
r^{2}-\phi_{R}\left(  r+\rho\right)  r\left(  r+\rho\right)  \right]
\left\Vert \theta\right\Vert ^{2}\\
&  \quad +\frac{1}{\left(  r+\rho\right)  }\left[
\phi_{R}\left(  r+\rho\right)  \rho^{2}-\phi_{R}\left(  r+\rho\right)
\rho\left(  r+\rho\right)  \right]  \left\Vert \sigma\right\Vert ^{2}+\frac{2r\rho}{\left(  r+\rho\right)  }\phi_{R}\left(  r+\rho\right)
\left(  \theta\cdot\sigma\right)  \\
&  =-\frac{\phi_{R}\left(  r+\rho\right)  r\rho}{\left(  r+\rho\right)
}\left[  \left\Vert \theta\right\Vert ^{2}+\left\Vert \sigma\right\Vert
^{2}-2\left(  \theta\cdot\sigma\right)  \right]  =-\frac{\phi_{R}\left(
r+\rho\right)  r\rho}{\left(  r+\rho\right)  }\left\Vert \theta-\sigma
\right\Vert ^{2}.
\end{align*}
Then, for $R\geq L$, we obtain from (\ref{S7E2})
\begin{align}\label{S7E3}
&  \int_{\mathbb{R}_{*}}r^{d-1}dr\int_{\mathbb{R}_{*}}\rho^{d-1}d\rho
\int_{\Delta^{d-1}}d\tau\left(  \theta\right)  \int_{\Delta^{d-1}}d\tau\left(
\sigma\right)  G\left(  r,\rho;\theta,\sigma\right)  F\left(  r,\theta\right)
F\left(  \rho,\sigma\right)  
\nonumber \\ &  \qquad\times
\frac{\phi_{R}\left(  r+\rho\right)  r\rho
}{2\left(  r+\rho\right)  }\left\Vert \theta-\sigma\right\Vert ^{2}%
\nonumber \\ &  \quad
\leq d\int_{{\mathbb{R}_{*}^{d}}}\left\vert x\right\vert \phi_{R}\left(
\left\vert x\right\vert \right)  \eta\left(  dx\right)=  d|J_0|\,,
\end{align}
where $J_0:= \int_{{\mathbb{R}_{*}^{d}}} x \eta\left(  dx\right)\in \R^d_*$ is the  mass injection rate vector and the last equality is obtained by taking into account that the injection solutions do not
have any support on values with $|x|<1$.

Since $\cf{1\le r+\rho\le R}\le \phi_{R}\left(  r+\rho\right)$, the bound in (\ref{S7E3}) remains valid after we replace $\phi_{R}$ with this characteristic function.
The new function, however, is pointwise monotone increasing in $R$, approaching $1$ pointwise as $R\to\infty$, so we may apply monotone convergence Theorem to take the limit $R\to\infty$ inside the integral. This yields
\[
\int_{\mathbb{R}_{*}}r^{d-1}dr\int_{\mathbb{R}_{*}}\rho^{d-1}d\rho\int
_{\Delta^{d-1}}d\tau\left(  \theta\right)  \int_{\Delta^{d-1}}d\tau\left(
\sigma\right)  G\left(  r,\rho;\theta,\sigma\right)  F\left(  r,\theta\right)
F\left(  \rho,\sigma\right)  \frac{r\rho\left\Vert \theta-\sigma\right\Vert
^{2}}{\left(  r+\rho\right)  }\leq K_{0}%
\]
where $K_0:=2 d |J_0|>0$.
Using now the lower estimate in (\ref{Ineq2}), (\ref{Ineq3}) with $p=0$, as well as (\ref{G1}), we obtain%
\begin{equation}
\int_{\mathbb{R}_{*}}r^{d-1}dr\int_{\mathbb{R}_{*}}\rho^{d-1}d\rho\int
_{\Delta^{d-1}}d\tau\left(  \theta\right)  \int_{\Delta^{d-1}}d\tau\left(
\sigma\right)  \left(  r+\rho\right)  ^{\gamma-1}F\left(  r,\theta\right)
F\left(  \rho,\sigma\right)  r\rho\left\Vert \theta-\sigma\right\Vert ^{2}%
\leq\frac{K_{0}}{c_{1}}.
\label{S7E4}%
\end{equation}

For the rest of the proof, let us recall Proposition \ref{UppEstCont}, and let $b\in (0,1)$ and $C_1,C_2>0$ 
be some choice of constants for which the Proposition holds. 
We record the following bounds implied by Proposition \ref{UppEstCont} and Lemma \ref{lem:bound}: 
If $q<\frac{\gamma+1}{2}$ and $R\ge L$, we have
\[
  \frac{c_2 \sqrt{|J_0|}}{R^{\frac{\gamma+1}{2}-q}}\le \int_{\left[  R,\infty\right)\times
\Delta^{d-1}}F\left(  r,\theta\right)  r^{q+d-1}drd\tau\left(  \theta\right)\le 
 \frac{c_1 \sqrt{|J_0|}}{R^{\frac{\gamma+1}{2}-q}}\,,
\]
where the lower bound is obtained by restricting the integral into $r\in [R,b^{-1} R]$.
The constants depend on $q$ and $b$, as well as on $\gamma,\ p$ and $ d$.
In particular, for any fixed $R\ge L$ 
the value of the integral belongs to $(0,\infty)$, and we may define the Radon
probability measures
$\lambda(\theta;R)d\tau\left(  \theta\right) \in\mathcal{M}_{+,b}\left(  \Delta^{d-1}\right)  $ via the formula
\begin{equation}
\lambda\left(  \theta;R\right)  =\frac{\int_{\left[  R,\infty\right)
}F\left(  r,\theta\right)  r^{\gamma+d-1}dr}{\int_{\left[  R,\infty\right)  \times
\Delta^{d-1}}F\left(  r,\sigma\right)  r^{\gamma+d-1}drd\tau\left(  \sigma\right)
}\label{S7E8}%
\end{equation}
Let $Z(R)$ denote the value of the integral in the denominator, for which we have 
bounds
\begin{align}\label{eq:ldenombounds}
  \frac{C'_2 \sqrt{|J_0|}}{R^{\frac{1-\gamma}{2}}}\le Z(R)\le 
 \frac{C'_1 \sqrt{|J_0|}}{R^{\frac{1-\gamma}{2}}}\, .
\end{align}

To apply the measure concentration estimate, let us then consider the function
\begin{align}\label{eq:defvarla}
V(R) := \int_{\Delta^{d-1}} d\tau\left(  \theta\right)  \lambda\left(  \theta;R\right)
\int_{\Delta^{d-1}}
d\tau\left(  \sigma\right)  \lambda\left(  \sigma;R\right) \left\Vert \theta-\sigma\right\Vert ^{2}\,, \qquad R\ge L\,. 
\end{align}
Using Fubini's Theorem, we can rewrite the definition as
\begin{align*}
&  V(R) = \frac{1}{Z(R)^{2} }
 \int_{\left[  R,\infty\right)}r^{\gamma+d-1}dr\int_{\left[  R,\infty\right)}\rho^{\gamma+d-1}d\rho
\\ & \qquad \times
 \int_{\Delta^{d-1}}d\tau\left(  \theta\right)  \int_{\Delta^{d-1}}d\tau\left(
\sigma\right)  F\left(  r,\theta\right)
F\left(  \rho,\sigma\right)  \left\Vert \theta-\sigma\right\Vert ^{2}\,. 
\end{align*}
Here, $r^\gamma \rho^\gamma = 
r \rho (r+\rho)^{\gamma-1} (r^{-1}+ \rho^{-1})^{1-\gamma}\le r \rho (r+\rho)^{\gamma-1} (2/R)^{1-\gamma}$.
Therefore, by (\ref{eq:ldenombounds}), 
\begin{align*}
& V(R)\le \frac{2^{1-\gamma}}{|J_0|(C'_2)^2} \int_{\left[  R,\infty\right)}r^{d-1}dr\int_{\left[  R,\infty\right)}\rho^{d-1}d\rho\, r\rho (r+\rho)^{\gamma-1}
\\ &\qquad\times
\int_{\Delta^{d-1}}d\tau\left(  \theta\right)  \int_{\Delta^{d-1}}d\tau\left(
\sigma\right)  F\left(  r,\theta\right)
F\left(  \rho,\sigma\right)  \left\Vert \theta-\sigma\right\Vert ^{2}\,.
\end{align*}
Hence, there is a constant $C'>0$ such that 
\begin{align*}
& 0\le V(R) \le \frac{C'}{K_0} \int_{\R_ *}r^{d-1}dr\int_{\R_*}\rho^{d-1}d\rho\,  \cf{r,\rho \ge R}
\\ &\qquad\times
\int_{\Delta^{d-1}}d\tau\left(  \theta\right)  \int_{\Delta^{d-1}}d\tau\left(
\sigma\right) (r+\rho)^{\gamma-1} F\left(  r,\theta\right)
F\left(  \rho,\sigma\right)  r\rho \left\Vert \theta-\sigma\right\Vert ^{2}\,.
\end{align*}
The finite bound in (\ref{S7E4}) now allows to use dominated convergence Theorem to conclude that
\begin{align}\label{eq:VRtozero}
 \lim_{R\to \infty} V(R)=0\,.
\end{align}

We can then apply Lemma \ref{LemmSplit} as follows, with $C_d$ denoting the constant in the Lemma.  
We first choose $R_0\ge L,\frac{2}{C_d}$ so that $V(R)< \frac{C_d}{2}$ for all $R\ge R_0$, which is possible due to (\ref{eq:VRtozero}).
Given $R\ge R_0$,
we choose $\delta(R)$ and $\vep(R)$ so that the alternative (ii) fails.
To this end, let us pick an arbitrary function $\vep_0:\R_*\to \R_*$ which satisfies $\vep_0(R)\le \frac{1}{R}$ for all $R>0$, and then 
define for $R\ge R_0$
\begin{align}\label{eq:defndelvep}
 \delta(R) := \vep(R)\,,  \qquad
 \vep(R) := \left(\frac{V(R)+\vep_0(R)}{C_d}\right)^{\frac{1}{d+2}}\in (0,1)\,,
\end{align}
so that always $C_d\delta(R)\vep(R)^{d+1}=V(R)+\vep_0(R)>V(R)$ and $\delta(R)=\vep(R)\in (0,1)$, with $\vep(R)\to 0$ as $R\to\infty$. 
Then Lemma \ref{LemmSplit} implies that for each $R\ge R_0$ there 
exists $\sigma_{0}\left(  R\right)  \in\Delta^{d-1}$ for which
\[
\int_{\bar B_{\varepsilon(R)}\left(  \sigma_{0}\left(  R\right)  \right)  }\lambda
\left(  \theta;R\right)  d\tau\left(  \theta\right)  > 1-\varepsilon(R)\,.
\]

Then, the compactness of $\Delta^{d-1}$ implies that for every sequence
$\left\{  R_{n}\right\}  _{n\in\mathbb{N}}$ with $\lim_{n\rightarrow\infty
}R_{n}=\infty,$ there exists a subsequence $\left\{  R_{n_{k}}\right\}
_{k\in\mathbb{N}}$ and $\theta_{0}\in\Delta^{d-1}$ such that
$\bar\sigma_k:=\sigma_0(R_{n_k})\to \theta_0$ in $\R^d$ and, hence,
\begin{equation}
\lambda_k\left(  \theta\right):= \lambda\left(  \theta;R_{n_{k}}\right)  \rightharpoonup\delta\left(
\theta-\theta_{0}\right)  \ \ \text{as\ \ }k\rightarrow\infty\label{S7E9}%
\end{equation}
where the convergence takes place in the weak topology of measures.
The Dirac $\delta$-measure above is defined with respect to $d\tau\left(  \theta\right) $, for instance, $\int_{\Delta^{d-1}}d\tau\left(  \theta\right)\delta\left(
\theta-\theta_{0}\right) = 1$  even if $\theta_0$ is on the boundary of $\Delta^{d-1}$.

In principle, $\theta_0$ depends on the chosen subsequence $\{R_{n_{k}}\}$. We prove now that this is not the case and characterize $\theta_{0}.$ We plug into the
weak formulation (\ref{S7E2}) the test functions $\psi_j\left(  r,\theta\right)
=r\phi_{R}\left(  r\right)  \theta_j$, $j=1,2,\ldots,d$, where $\phi_R$ satisfies all earlier conditions.
In addition, we now require that there is $c>0$ such that 
$0\ge \phi'_R(r)\ge -\frac{c}{R}$ for $r\ge 1$,
so that $0\le \phi_R(r)-\phi_R(r+\rho)\le \frac{c}{R} \rho$ for all $r,\rho\ge 1$.  A function satisfying all the requirements can be constructed for instance by taking a smooth non-increasing function $g_1$ for which $g_1(r)=1$ for $r\le 1$, $g_1(r)=0$, $r\ge 2$, and then defining $\phi_R(r)=(1-g_1(2 r)) g_1(r/R)$.

Then, using a vector notation, 
\begin{align*}
&  \psi\left(  r+\rho,\frac{r}{r+\rho}\theta+\frac{\rho}{r+\rho}\sigma\right)
-\psi\left(  r,\theta\right)  -\psi\left(  \rho,\sigma\right)  \\
&  =\phi\left(  r+\rho\right)  \left(  r\theta+\rho\sigma\right)
-r\phi\left(  r\right)  \theta-\rho\phi\left(  \rho\right)  \sigma\\
&  =r\left[  \phi\left(  r+\rho\right)  -\phi\left(  r\right)  \right]
\theta+\rho\left[  \phi\left(  r+\rho\right)  -\phi\left(  \rho\right)
\right]  \sigma.
\end{align*}
Plugging this identity into (\ref{S7E2}) and using a symmetrization argument
to exchange the variables $\left(  r,\theta\right)  \longleftrightarrow\left(
\rho,\sigma\right)  $ we obtain%
\begin{align}
&\int_{\mathbb{R}_{*}}r^{d-1}dr\int_{\mathbb{R}_{*}}\rho^{d-1}d\rho\int
_{\Delta^{d-1}}d\tau\left(  \theta\right)  \int_{\Delta^{d-1}}d\tau\left(
\sigma\right)  G\left(  r,\rho;\theta,\sigma\right)  F\left(  r,\theta\right)
F\left(  \rho,\sigma\right)\times \nonumber\\&  \quad \times r\left[  \phi_{R}\left(  r\right)  -\phi_{R}\left(  r+\rho\right)  \right]  \theta= J_0d\label{S8E1}%
\end{align}
where we assume that $R>L$ and $J_0$ is the injection rate vector, as defined above.

We then expand using $\theta=\theta_0 + \theta - \theta_0$, and denote
\begin{align}\label{eq:defomegar}
&\omega(R) :=\int_{\mathbb{R}_{*}}r^{d-1}dr\int_{\mathbb{R}_{*}}\rho^{d-1}d\rho\int
_{\Delta^{d-1}}d\tau\left(  \theta\right)  \int_{\Delta^{d-1}}d\tau\left(
\sigma\right)  G\left(  r,\rho;\theta,\sigma\right)  F\left(  r,\theta\right)
F\left(  \rho,\sigma\right)\times \nonumber\\&  \quad \times r\left[  \phi_{R}\left(  r\right)  -\phi_{R}\left(  r+\rho\right)  \right]  (\theta-\theta_0)\, .%
\end{align}
We also note that
\begin{align}
&\int_{\mathbb{R}_{*}}r^{d-1}dr\int_{\mathbb{R}_{*}}\rho^{d-1}d\rho\int
_{\Delta^{d-1}}d\tau\left(  \theta\right)  \int_{\Delta^{d-1}}d\tau\left(
\sigma\right)  G\left(  r,\rho;\theta,\sigma\right)  F\left(  r,\theta\right)
F\left(  \rho,\sigma\right)\times \nonumber\\&  \quad \times r\left[  \phi_{R}\left(  r\right)  -\phi_{R}\left(  r+\rho\right)  \right]  = |J_0|d\,,
\end{align}
as seen by summing over the components in (\ref{S8E1}), since $\theta\in \Delta^{d-1}$.  Therefore, for all $R>L$,
\[
 d |J_0| \theta_0 + \omega(R) = d J_0\,.
\]
We claim that there is a sequence $R'_k> L$ such that 
$\omega(R'_k)\to 0$ in $\R^d$ as $k\to \infty$.  This implies
\[
 \theta_0 = \frac{1}{|J_0|}J_0\,,
\]
as claimed in the Theorem.  In particular, the value does not then depend on the choice of the subsequence $(R_{n_k})$ above.
Therefore, $\sigma(R)\to\theta_0$ as $R\to \infty$, and 
we can then replace (\ref{S7E9}) by $\lambda\left(  \theta;R\right)
\rightharpoonup\delta\left(  \theta-\theta_{0}\right)  $ as $R\rightarrow
\infty$.

Thus it only remains to construct the sequence $R'_k>L$, $k\in \N_+$, such that 
$\omega(R'_k)\to 0$ as $k\to \infty$.  We do this by showing that
$\lim_{\delta\to 0}\limsup_{k\to \infty}\norm{\omega(R_{n_k} \delta^{-2})}=0$
where $\delta\in \left(0,\frac{1}{4}\right]$ since then a diagonal construction allows finding suitable $R'_k$.

We start from (\ref{eq:defomegar}) after choosing arbitrary $k$ and
$\delta\in \left(0,\frac{1}{4}\right]$ and setting $R=R_{n_k} \delta^{-2}>L$.
For $r,\rho\ge 1$, we
have $0\le \phi_R(r)-\phi_{R}(r+\rho)\le \min\left(1,\frac{c}{R} \rho\right)$, by assumption.  In addition, the difference is zero, if $r\ge 2 R$ or $r+\rho\le R$.
Therefore, the integration region may be restricted to 
\[
 \Omega := \defset{(r,\rho)\in \R_*^2}{1\le r\le 2R\,,\ \rho\ge 1\,,\ r+\rho>R}\,.
\]
We now split the region into three parts $\Omega_i$, $i=1,2,3$, analogously to what was used in the proof of Proposition $6.3$ in \cite{FLNV}:
We set 
\begin{align}\label{defomisplit}
 & \Omega_1 := \defset{(r,\rho)\in \Omega}{\rho>r/\delta}\,, \qquad
 \Omega_2 := \defset{(r,\rho)\in \Omega}{\delta R\le \rho\le r/\delta}\,,\nonumber\\
 & 
 \Omega_3 := \defset{(r,\rho)\in \Omega}{\rho<\delta R}\,.
\end{align}
and then define
\begin{align} \label{I1}
& I_i =\int_{\Omega_i} dr\,d\rho \ r^{d} \rho^{d-1}
\int_{\Delta^{d-1}}d\tau\left(  \theta\right)  \int_{\Delta^{d-1}}d\tau\left(
\sigma\right)  G\left(  r,\rho;\theta,\sigma\right)  F\left(  r,\theta\right)
F\left(  \rho,\sigma\right)\times \nonumber\\&  \quad \times \left[  \phi_{R}\left(  r\right)  -\phi_{R}\left(  r+\rho\right)  \right]  (\theta-\theta_0)\, .%
\end{align}
so that $\omega(R) = \sum_{i=1}^3 I_i$.

In the region $\Omega_1$  we use the trivial bound $\left[  \phi_{R}\left(  r\right)  -\phi_{R}\left(  r+\rho\right)  \right]  \norm{\theta-\theta_0}\le 2$ 
in the integrand, as well as the estimate $G\left(  r,\rho;\theta,\sigma\right) \leq c_2 (r+\rho)^\gamma$.
By adapting the  proof of the corresponding estimate in Lemma 6.1 in \cite{FLNV} applied to the measure 
$ \int_{\Delta^{d-1}}d\tau\left(  \theta\right)   r^{d-1} F\left(  r,\theta\right)$  with the kernel  $ (r+\rho)^\gamma $, we may conclude that $\sup_{R} \|I_1(\delta,R) \| \to 0$ as $\delta \to 0$.

In the region $\Omega_3$ we use the upper bound $\left[  \phi_{R}\left(  r\right)  -\phi_{R}\left(  r+\rho\right)  \right]  \norm{\theta-\theta_0}\le 2 c \rho/R$.
Moreover, using that $\Omega_3 \subset [ 1,\delta R] \times [ R/2 , 2R]$ as well as the upper bound of the kernel in (\ref{Ineq2}), (\ref{Ineq3}) with $p=0$, and recalling  (\ref{G1}), we obtain the bound
\begin{align*}
& \norm{I_3}\le C R^{\gamma-1}\int_{\{1\le \rho \le \delta R\,,\, \frac{1}{2} R\le r\le 2R\}} dr\,d\rho \ r^{d} \rho^{d}
\int_{\Delta^{d-1}}d\tau\left(  \theta\right)  \int_{\Delta^{d-1}}d\tau\left(
\sigma\right)   F\left(  r,\theta\right)
F\left(  \rho,\sigma\right)
\\ & \quad 
\le C |J_0| \delta^{\frac{1-\gamma}{2}}
\,,
\end{align*}
where in the second inequality we have applied Lemma \ref{lem:bound} with $q=1-\frac{\gamma+3}{2}$. Also this bound is independent of $R$ and goes to zero as $\delta\to 0$.

It remains to estimate $I_2$.
 Now on $\Omega_2$ we have $(r+\rho)^\gamma\le 3^{|\gamma|}\delta^{-|\gamma|}\rho^\gamma$, and also
$\Omega_2 \subset [\delta^2 R,2 R]\times [\delta R,\infty)$.  Therefore, we obtain
an estimate
\begin{align*}
&\norm{I_2}\le C \delta^{-|\gamma|} R \int_{[\delta^2 R,2R]} dr\, r^{d-1}
\int_{[\delta R,\infty)}  d\rho \, \rho^{\gamma+d-1}
\\ & \qquad \times
\int_{\Delta^{d-1}}d\tau\left(  \theta\right)  \int_{\Delta^{d-1}}d\tau\left(
\sigma\right)   F\left(  r,\theta\right)
F\left(  \rho,\sigma\right)  \norm{\theta-\theta_0}\,.
\\ & \quad 
\le
 C' \delta^{-3 |\gamma|} R^{1-\gamma} Z(\delta^2 R) Z(\delta R) 
\int_{\Delta^{d-1}} d\theta \lambda(\theta;\delta^2 R) \norm{\theta-\theta_0}
\\ & \quad 
\le
 C'|J_0| \delta^{-3 |\gamma|+ 3/2(\gamma-1)} 
\int_{\Delta^{d-1}} d\theta \lambda(\theta;\delta^2 R) \norm{\theta-\theta_0}\,.
\end{align*}
By our choice of $R, \delta$, we have here $\delta^2 R=R_{n_k}$.
Therefore,
\[
 \int_{\Delta^{d-1}} d\theta \lambda(\theta;\delta^2 R) \norm{\theta-\theta_0}
 \le \norm{\theta_0-\bar\sigma_k}
 + \int_{\Delta^{d-1}} d\theta \lambda(\theta;R_{n_k}) \norm{\theta-\sigma_0(R_{n_k})}\,,
\]
which goes to zero as $k\to \infty$, by construction.  

Collecting the estimates of $I_i$ together, we find
$\lim_{\delta\to 0}\limsup_{k\to \infty}\norm{\omega(R_{n_k} \delta^{-2})}=0$
which completes the proof of $\lim_{R\to \infty}\sigma_0(R)= \theta_0$.
Hence, also 
\[
 v(R):=\left(\int_{\Delta^{d-1}} d\theta \lambda(\theta;R) \norm{\theta-\theta_0}\right)^{\frac{1}{2}}
 \to 0\,,
\]
and by Chebyshev-type estimate then 
\[
 \int_{\Delta^{d-1}} d\theta \lambda(\theta;R) \cf{\norm{\theta-\theta_0}> v(R)}
 \le v(R)\to 0\,.
\]

To conclude the proof of the Theorem, let us show that this result implies the claim in the Theorem, if we choose there $\delta(R)=\sqrt{d}\,v(R)$.  We have
\begin{align*}
& \frac{\int_{\left[  R,R/b\right]  }%
dr\int_{\Delta^{d-1}\cap\left\{  \left\vert \theta-\theta_{0}\right\vert
>\delta\left(  R\right)  \right\}  }d\tau\left(  \theta\right)  F\left(
r,\theta\right)  }{\int_{\left[  R,R/b\right]  }dr\int_{\Delta^{d-1}}d\tau\left(
\theta\right)  F\left(  r,\theta\right) }
\\ & \quad
\le \frac{C Z(R)}{ R^\gamma\int_{\left[  R,R/b\right]  }dr \, r^{d-1}\int_{\Delta^{d-1}}d\tau\left(
\theta\right)  F\left(  r,\theta\right) }\int_{\Delta^{d-1}} d\theta \lambda(\theta;R) \cf{\norm{\theta-\theta_0}> v(R)}
\end{align*}
where the constant $C$ depends only on $b$, $d$ and $\gamma$.  Employing 
Proposition \ref{UppEstCont} and (\ref{eq:ldenombounds}),
we find that the first factor 
on the right hand side is uniformly bounded in $R$, 
and thus the right hand side goes to zero as $R\to \infty$.
Thus  (\ref{S2E4}), (\ref{S2E5}) hold.  This concludes the proof of Theorem \ref{LocContInj}.
\end{proofof}

\begin{remark}
We notice that, in principle, the value of $\theta_0$ can be at the
boundary of $\Delta^{d-1}$. 
\end{remark}

\bigskip

We now prove Theorem \ref{LocDiscInj} for the discrete coagulation equation. 

\bigskip

\begin{proofof}
[Proof of Theorem \ref{LocDiscInj}]It is an easy consequence of Theorem
\ref{LocContInj} using the fact that $f\left(  \cdot\right)  =\sum_{\alpha
}n_{\alpha}\delta\left(  \cdot-\alpha\right)  $ and $\eta\left(  \cdot\right)
=\sum_{\alpha}s_{\alpha}\delta\left(  \cdot-\alpha\right)  $ satisfy all the
assumptions in Theorem \ref{LocContInj}. Then (\ref{S2E6}) is just a
consequence of (\ref{S2E4}).
\end{proofof}

\bigskip

\section{Localization properties of the constant flux solutions}\label{sec:loca_constantFlux}

We now prove Theorem \ref{LocConstFlux}. The assumptions and some of the details of the proof are very similar to the earlier cases, and then they will not be repeated here.

\bigskip

\begin{proofof}
[Proof of Theorem \ref{LocConstFlux}]Due to Lemma \ref{EquivDefin}  it
is enough to prove the result for $\gamma<1,\ p=0.$ We consider the weak
formulation (\ref{S7E2}) (with $\eta=0$), namely 
\begin{align}
&  \frac{1}{2}\int_{\mathbb{R}_{*}}r^{d-1}dr\int_{\mathbb{R}_{*}}\rho
^{d-1}d\rho\int_{\Delta^{d-1}}d\tau\left(  \theta\right)  \int_{\Delta^{d-1}%
}d\tau\left(  \sigma\right)  G\left(  r,\rho;\theta,\sigma\right)  F\left(
r,\theta\right)  F\left(  \rho,\sigma\right)  \times\nonumber\\
&  \times\left[  \psi\left(  r+\rho,\frac{r}{r+\rho}\theta+\frac{\rho}{r+\rho
}\sigma\right)  -\psi\left(  r,\theta\right)  -\psi\left(  \rho,\sigma\right)
\right]   =0\label{weaknoeta}. 
\end{align}
We now choose continuous test
functions $\psi\left(  r,\theta\right)  =r\phi_{R_{1},R_{2}}\left(  r\right)
\left\Vert \theta\right\Vert ^{2}$ where we require that
$0<R_{1}<R_{2}$ and we construct $\phi_{R_{1},R_{2}%
}\left(  r\right)  $ as%
\begin{equation}\label{eq:ffiR12}
\phi_{R_{1},R_{2}}=\phi_{R_{2}}-\phi_{\frac{R_{1}}{2}}%
\end{equation}
where the functions $\phi_{R}$ are defined by 
\begin{equation*}
\phi_{R}(r)=\left\{
 \begin{array}
 [c]{cc}%
 \vspace{2mm} 1 & \text{if\ \ }
 r\le  R\,,
 \\
 1-\frac{r-R}{R} & \qquad\text{if\ \ } R\leq r\leq 2R\, , \\
 0 & \text{if\ \ }
 r\geq 2 R\,. 
 \end{array}
\right.
\end{equation*}
Then each $\phi_{R}$ is Lipschitz continuous and decreasing, and $\phi_{R_{1},R_{2}}$ are compactly supported. 
Although $\phi_{R}$ is not continuously differentiable, and hence strictly speaking not an allowed test function, a standard approximation argument shows that, nevertheless, equation (\ref{weaknoeta}) holds also for this choice.
Alternatively, one may use below the smooth bump function  $\phi_{R}$
constructed after (\ref{S7E9}), although some of the constants in the upper bounds would need to be increased then.

For this choice of test functions we obtain 
\begin{align*}
&  \psi\left(  r+\rho,\frac{r}{r+\rho}\theta+\frac{\rho}{r+\rho}\sigma\right)
-\psi\left(  r,\theta\right)  -\psi\left(  \rho,\sigma\right)  \\
&  =\left(  r+\rho\right)  \phi_{R_{1},R_{2}}\left(  r+\rho\right)  \left\Vert
\frac{r}{r+\rho}\theta+\frac{\rho}{r+\rho}\sigma\right\Vert ^{2}-r\phi
_{R_{1},R_{2}}\left(  r\right)  \left\Vert \theta\right\Vert ^{2}-\rho
\phi_{R_{1},R_{2}}\left(  \rho\right)  \left\Vert \sigma\right\Vert ^{2}\\
&  =\frac{1}{\left(  r+\rho\right)  }\left[  \phi_{R_{1},R_{2}}\left(r+\rho\right)  r^{2}-\phi_{R_{1},R_{2}}\left(  r\right)  r\left(
r+\rho\right)  \right]  \left\Vert \theta\right\Vert ^{2} \\
& \quad +\frac{1}{\left(
r+\rho\right)  }\left[  \phi_{R_{1},R_{2}}\left(  r+\rho\right)  \rho^{2}%
-\phi_{R_{1},R_{2}}\left(  \rho\right)  \rho\left(  r+\rho\right)  \right]
\left\Vert \sigma\right\Vert ^{2}
+\frac{2r\rho}{\left(  r+\rho\right)  }\phi_{R_{1},R_{2}}\left(
r+\rho\right)  \left(  \theta\cdot\sigma\right)  \\
&  =-\frac{\phi_{R_{1},R_{2}}\left(  r+\rho\right)  r\rho}{\left(
r+\rho\right)  }\left\Vert \theta-\sigma\right\Vert ^{2}+\left(  \phi
_{R_{1},R_{2}}\left(  r+\rho\right)  -\phi_{R_{1},R_{2}}\left(  r\right)
\right)  r\left\Vert \theta\right\Vert ^{2}\\&\quad +\left(  \phi_{R_{1},R_{2}}\left(
r+\rho\right)  -\phi_{R_{1},R_{2}}\left(  \rho\right)  \right)  \rho\left\Vert
\sigma\right\Vert ^{2}.
\end{align*}
Plugging this into \eqref{weaknoeta} and using also the
symmetrization $\left(  r,\theta\right)  \longleftrightarrow\left(
\rho,\sigma\right)  $ we obtain
\begin{align}\label{eq:U12}
&\frac 1 2 \int_{\mathbb{R}_{*}}r^{d-1}dr\int_{\mathbb{R}_{*}}\rho
^{d-1}d\rho\int_{\Delta^{d-1}}d\tau\left(  \theta\right)  \int_{\Delta^{d-1}
}d\tau\left(  \sigma\right)  U_1(r,\rho,\theta,\sigma) \nonumber \\&
= \int_{\mathbb{R}_{*}}r^{d-1}dr\int_{\mathbb{R}_{*}}\rho
^{d-1}d\rho\int_{\Delta^{d-1}}d\tau\left(  \theta\right)  \int_{\Delta^{d-1}
}d\tau\left(  \sigma\right) U_2( r,\rho,\theta,\sigma) \left(  \phi_{R_{1},R_{2}}\left(  r+\rho\right)
-\phi_{R_{1},R_{2}}\left(  r\right) \right)
\end{align}
where 
\begin{equation*}
U_1(r,\rho,\theta,\sigma)= G\left(  r,\rho;\theta,\sigma\right)  F\left(  r,\theta\right) F\left(  \rho,\sigma\right)\frac{\phi_{R_{1},R_{2}}\left(
r+\rho\right)  r\rho}{\left(  r+\rho\right)  }\left\Vert \theta-\sigma
\right\Vert ^{2}
\end{equation*}
and 
\begin{equation*}
U_2(r,\rho,\theta,\sigma)=G\left(  r,\rho;\theta,\sigma\right)  F\left(  r,\theta\right)F\left(  \rho,\sigma\right)  r\left\Vert \theta\right\Vert^{2}.
\end{equation*}

Using \eqref{eq:ffiR12} we can then write \eqref{eq:U12} as%
\begin{equation} 
j_{\frac{R_{1}}{2}}=D_{\frac{R_{1}}{2},R_{2}}+j_{R_{2}}\label{S8E5}%
\end{equation}
where%
\begin{align}
j_{R} &  =\int_{\mathbb{R}_{*}}r^{d-1}dr\int_{\mathbb{R}_{*}}\rho^{d-1}%
d\rho\int_{\Delta^{d-1}}d\tau\left(  \theta\right)  \int_{\Delta^{d-1}}%
d\tau\left(  \sigma\right)  U_2(r,\rho,\theta,\sigma) \left(  \phi_{R}\left(
r\right)  -\phi_{R}\left(  r+\rho\right)  \right) \,, \label{def:jR} \\
D_{\frac{R_{1}}{2},R_{2}} &  =\frac{1}{2}\int_{\mathbb{R}_{*}}r^{d-1}%
dr\int_{\mathbb{R}_{*}}\rho^{d-1}d\rho\int_{\Delta^{d-1}}d\tau\left(
\theta\right)  \int_{\Delta^{d-1}}d\tau\left(  \sigma\right)  U_1(r,\rho,\theta,\sigma)
\,. \label{def:DR1R2}
\end{align}
Note that, since the functions $\phi_{R}$ are decreasing, we have $j_{R}\geq0$. Moreover, given that $\phi_{R_{1},R_{2}}\geq 0$ then 
$D_{\frac{R_{1}}{2},R_{2}}\geq0$. Hence,
\begin{equation}\label{est:DRjR}
0\leq D_{\frac{R_{1}}{2},R_{2}} \leq j_{\frac{R_{1}}{2}}.
\end{equation}

We next show that the integrals $j_{R}$ for any $R> 0$ are
bounded by $C\left\vert J_0\right\vert $ where $J_0$ is the vector flux appearing in the definition of constant flux solution (cf.~Definition \ref{DefConstFlux}) and $C$ is independent on $R$. 
More precisely, we claim that
\begin{equation}\label{est:jRJ0}
0\leq j_{R} \leq C\left\vert J_0\right\vert , \quad \text{for any}\quad R> 0.
\end{equation} 

To prove this we notice that the integrand in the definition of  $j_{R}$ can be non-zero only if $r\leq2R$, $r+\rho>R$.   Furthermore, since
$\phi_{R}(r)-\phi_{R}(r+\rho)\leq \frac{\rho}{R}$ and $\phi_{R}(r)\leq 1 $ we have 
\[
\phi_{R}(r)-\phi_{R}(r+\rho)\leq \min\left\{\frac{\rho}{R},1\right\}. 
\]
Then, using \eqref{def:jR} we obtain
\begin{equation}\label{est:jR}
j_{R}  \leq \iint_{\{r\leq 2R, \, \rho+r \geq \frac{R}{2}\}} r^{d} \rho^{d-1} dr d\rho\int_{\Delta^{d-1}}d\tau\left(  \theta\right)  \int_{\Delta^{d-1}}%
d\tau\left(  \sigma\right)  W(r,\rho;\theta,\sigma) \min\left\{\frac{\rho}{R},1\right\},
\end{equation}
where we set $W(r,\rho;\theta,\sigma)=G\left(  r,\rho;\theta,\sigma\right)  F\left(  r,\theta\right)F\left(  \rho,\sigma\right)$.
Using Definition \ref{DefConstFlux} we have 
\begin{equation}\label{est:AjR}
\int_{\frac{R}{4}}^{4R} \sum_{j=1}^{d} A_j(\xi)d\xi =\frac{15}{4}R \vert J_0\vert \, .
\end{equation}
Hence, thanks to \eqref{S1E8}, we arrive at
\begin{align}\label{est:AjR2}
&\frac{15d}{4}R \vert J_0\vert \nonumber \\
&\quad =\int_{\frac{R}{4}}^{4R} d\xi  \int_{\{ r \leq 4R\} } r^{d} dr \int_{\R_*}\rho^{d-1} d\rho  \int_{\Delta^{d-1}}d\tau\left(  \theta\right)  \int_{\Delta^{d-1}}
d\tau\left(  \sigma\right) \cf{r \leq \xi< r+\rho} W(r,\rho;\theta,\sigma)
\end{align}
where for notational convenience we drop the arguments in $ \cf{r \leq \xi< r+\rho}(\xi, r,\rho)$.
This implies
\begin{align}
&\frac{15 d}{4}R \vert J_0\vert \geq \nonumber \\&
\iint_{\{r\leq 2R, \, \rho+r \geq \frac{R}{2}\}} r^{d} \rho^{d-1} dr d\rho\int_{\Delta^{d-1}}d\tau\left(  \theta\right)  \int_{\Delta^{d-1}} d\tau\left(  \sigma\right)  \int_{\frac{R}{4}}^{4R} d\xi \cf{r \leq \xi< r+\rho} W(r,\rho;\theta,\sigma). \label{est:AjR3}
\end{align}

We now want to prove that in the integrand of (\ref{est:AjR3})
\begin{equation}\label{est:intchi}
\int_{\frac{R}{4}}^{4R} d\xi  \cf{r \leq \xi< r+\rho} \geq \frac{R}{4}  \min\left\{\frac{\rho}{R},1\right\}.
\end{equation}
With this aim we consider separately the cases $\rho \leq r$ and $\rho >r$. 
If $\rho \leq r$, $r\leq 2R$, and $\rho+r \geq \frac{R}{2}$, we have $r\geq \frac R 4$ and $\rho+r \leq 4R$.  Then
\begin{equation*}
\int_{\frac{R}{4}}^{4R} d\xi  \cf{r \leq \xi< r+\rho}=\int_{r}^{\rho+r} d\xi = \rho. 
\end{equation*}

Suppose now that $\rho > r$, $r\leq 2R$,  and $\rho+r \geq \frac{R}{2}$. Then 
\begin{equation*}
\int_{\frac{R}{4}}^{4R} d\xi  \cf{r \leq \xi< r+\rho}=\min\left\{(r+\rho),4R\right\}-\max \left\{r,\frac{R}{4}\right\} = R\psi\left(\frac{r}{R},\frac{\rho}{R}\right)
\end{equation*}
where $\psi\left(y_1,y_2\right)= \min\left\{(y_1+y_2),4\right\}-\max \left\{y_1,\frac{1}{4}\right\} $.  
It turns out that 
$$
\min_{\left\{0\leq y_1\leq 2, \, y_1+y_2\geq \frac 1 2,\, y_1\leq y_2 \right\} } \psi\left(y_1,y_2\right)\geq \frac 1 4 .
$$
This inequality follows considering separately the regions $U_j\cap V_k\cap \{y_1\leq y_2 \}$ for $j=1,2$, $k=1,2$ with
\begin{align*} 
&U_1=\{y_1+y_2\geq 4\} ,\quad U_2=\{y_1+y_2\leq 4 \}, \nonumber \\&
V_1=\{y_1\geq \frac{1}{4}\}, \quad V_2=\{y_1\leq \frac{1}{4}\}.
\end{align*}
Therefore, \eqref{est:intchi} follows. Combining \eqref{est:jR} with \eqref{est:AjR3} and \eqref{est:intchi} we obtain \eqref{est:jRJ0}. 
We now use the inequality \eqref{est:jRJ0} which, together with \eqref{est:DRjR}, yields 
\[
D_{\frac{R_{1}}{2},R_{2}}\leq C\left\vert J\right\vert\, ,
\qquad 0<R_1<R_2\,.
\]
By monotone convergence, this result implies also that
\begin{align}
& D_{0,\infty} :=\frac{1}{2}\int_{\mathbb{R}_{*}}r^{d-1}%
dr\int_{\mathbb{R}_{*}}\rho^{d-1}d\rho\int_{\Delta^{d-1}}d\tau\left(
\theta\right)  \int_{\Delta^{d-1}}d\tau\left(  \sigma\right)  W\left(  r,\rho;\theta,\sigma\right) \frac{r\rho}{r+\rho }\left\Vert \theta-\sigma
\right\Vert ^{2} \nonumber \\
& \quad
\leq C\left\vert J\right\vert \,. \label{def:DR1R2bis}
\end{align}
Our next goal is to use the above convergent integral to prove a dominated convergence argument similar to what was used in the proof of
Theorem \ref{LocContInj}.

First, let us recall that by Proposition  \ref{UppEstCont},
if we set 
\[
Z(R):=\int_{[R,\infty)}r^{\gamma+d-1}
dr\int_{\Delta^{d-1}}d\tau\left(
\theta\right) F\left(  r,\theta\right)\,,
\]
then we may follow the same argument as in the proof of 
Theorem \ref{LocContInj} and find constants such that (\ref{eq:ldenombounds}) holds for all $R>0$.
In particular, we may  define the probability 
measures $\lambda(\theta;R)$ as in \eqref{S7E8} for all $R>0$. Using Lemma \ref{LemmSplit}, we may then conclude via the same argument, now relying on (\ref{def:DR1R2bis}),
that with $\theta_0:= J_0/|J_0|$ we have 
\begin{equation}
\lambda\left(  \theta;R\right)  \rightharpoonup\delta\left(
\theta-\theta_{0}\right)  \ \ \text{as\ \ }R\rightarrow\infty
\,.\label{eq:wcon1bis}
\end{equation}

For the constant flux solution it is also possible to consider the limit $R\to 0$ for the probability distribution
\begin{equation}
\lambda_0\left(  \theta;R\right)  =\frac{\int_{(0,R]}
F\left(  r,\theta\right)  r^{1+d-1}dr}{Z_0(R)
}\,, \qquad
Z_0(R):=\int_{(0,R]}r^{d}
dr\int_{\Delta^{d-1}}d\tau\left(
\theta\right) F\left(  r,\theta\right)\,.
\end{equation}
For this limit, we replaced the power $\gamma$ with $1$
since now, whenever $0<r,\rho\le R$, we have 
$r \rho (r+\rho)^{\gamma-1}\ge 
r\rho (2R)^{\gamma-1}$, and on the other hand
applying Lemma \ref{lem:bound} and Proposition \ref{UppEstCont}  we find constants
$C'_1$ and $C'_2$ such that for all $R>0$
\begin{align} 
 C'_2 \sqrt{|J_0|} R^{\frac{1-\gamma}{2}}\le Z_0(R)\le 
  C'_1 \sqrt{|J_0|} R^{\frac{1-\gamma}{2}}\, .
\end{align}
Following analogous
steps as in the first limit case, it follows that 
\begin{equation}
\lambda_0\left(  \theta;R\right)  \rightharpoonup\delta\left(
\theta-\theta_{0}\right)  \ \ \text{as\ \ }R\rightarrow 0\,,\label{eq:wcon2bis}
\end{equation}
where $\theta_0$ is the same constant vector as in the first case.

It only
remains to show that the Dirac mass occurs not only  for
$R\rightarrow0$ or $R\rightarrow\infty,$ but also for arbitrary values of $R.$
If $0<R_1<R_2$, we can use 
the test function $\psi\left(
r,\theta\right)  =r\phi_{R_{1},R_{2}}\left(  r\right) $
to  obtain $\tilde{j}_{\frac{R_1}{2}}=\tilde{j}_{R_2}$ where
\begin{align*}
 & \tilde{j}_{R}:=\int_{\mathbb{R}_{*}}r^{d-1}dr\int_{\mathbb{R}_{*}}\rho^{d-1}d\rho
\int_{\Delta^{d-1}}d\tau\left(  \theta\right)  \int_{\Delta^{d-1}}d\tau\left(
\sigma\right)  W\left(  r,\rho;\theta,\sigma\right)  \left(  \phi_{R}\left(  r\right)  -\phi
_{R}\left(  r+\rho\right)  \right)  r \,.
\end{align*}
Then, using \eqref{eq:wcon1bis}, \eqref{eq:wcon2bis} together with (\ref{def:DR1R2bis}), we may conclude that $j_{R}-\|\theta_0\|^2 \tilde{j}_{R}\to 0$ both when $R\to 0$ and when $R\to\infty$; note that for any $\theta\in \Delta^{d-1}$
we have $|\norm{\theta}^2-\norm{\theta_0}^2|
\le 2|\norm{\theta}-\norm{\theta_0}|\le 2 \norm{\theta-\theta_0}$.
Thus, using the dissipation formula \eqref{S8E5},  we obtain that $D_{\frac{R_{1}}{2},R_{2}}\to 0$ as $R_1\to 0$, $R_2\to \infty$. This implies that $D_{0,\infty}=0$, and this is possible only if there is a measure $\tilde{H}(r)$ such that 
\[
F\left(  r,\theta\right)  =\tilde{H}\left(  r\right)  \delta\left(  \theta-\theta
_{0}\right).
\]

It is readily seen that, since $F$ solves the coagulation equation
(\ref{weaknoeta}), we can write $\tilde{H}\left(  r\right)  $ as
$\frac{H\left(  r\right)  }{r^{d-1}}$ where $H$ is a constant flux solution
with the kernel $G\left(  r,\rho,\theta_{0},\theta_{0}\right)  $ and the
result follows.
\end{proofof}


 \bigskip

\bigskip

\bigskip

\noindent \textbf{Acknowledgements.}
The authors gratefully acknowledge the support of the Hausdorff Research Institute for Mathematics
(Bonn), through the {\it Junior Trimester Program on Kinetic Theory},
of the CRC 1060 {\it The mathematics of emergent effects} at the University of Bonn funded through the German Science Foundation (DFG), 
 of the {\it Atmospheric Mathematics} (AtMath) collaboration of the Faculty of Science of University of Helsinki, of the ERC Advanced Grant 741487 as well as of  
the Academy of Finland via the {\it Centre of Excellence in Analysis and Dynamics Research} (project No.\ 307333).
The funders had no role in study design, analysis, decision to
publish, or preparation of the manuscript.

\bigskip
\noindent\textbf{Compliance with ethical standards} 
\smallskip

\noindent \textbf{Conflict of interest} The authors declare that they have no conflict of interest.

\bigskip

\bigskip

\bigskip 

 \bigskip
 
 \def\adresse{
\begin{description}

\item[M.~A. Ferreira] 
{Department of Mathematics and Statistics, University of Helsinki,\\ P.O. Box 68, FI-00014 Helsingin yliopisto, Finland \\
E-mail:  \texttt{marina.ferreira@helsinki.fi}}

\item[J. Lukkarinen]{ Department of Mathematics and Statistics, University of Helsinki, \\ P.O. Box 68, FI-00014 Helsingin yliopisto, Finland \\
E-mail: \texttt{jani.lukkarinen@helsinki.fi}}

\item[A. Nota:] {Department of Information Engineering, Computer Science and Mathematics,\\ University of L'Aquila, 67100 L'Aquila, Italy \\
E-mail: \texttt{alessia.nota@univaq.it}}

\item[J.~J.~L. Vel\'azquez] { Institute for Applied Mathematics, University of Bonn, \\ Endenicher Allee 60, D-53115 Bonn, Germany\\
E-mail: \texttt{velazquez@iam.uni-bonn.de}}

\end{description}
}

\adresse
 
 \bigskip

\end{document}